\definecolor{darkblue}{rgb}{0,0,.5}
\theoremstyle{plain}
\newtheorem{theorem}{Theorem}[section]
\newtheorem{lemma}[theorem]{Lemma}
\newtheorem{proposition}[theorem]{Proposition}
\theoremstyle{definition}
\newtheorem{remark}[theorem]{Remark}
\def\d{\textup{div}}
\def\D{\mathcal{D}}
\def\R{\mathbb{R}}
\def\I{\mathbb{I}}
\def\M{\mathcal{M}}
\def\F{\mathcal{F}}
\def\A{\mathcal{A}}
\newcommand{\up}{\upshape}
\def\vv<#1>{\langle#1\rangle}
\def\ww<#1>{\langle\langle#1\rangle\rangle}
\newcommand{\diag}[1]{\mbox{$\textup{diag}(#1)$}}
\newcommand{\pr}{\mbox{$\text{\up{pr}}$}}
\providecommand{\vol}{\mbox{$\text{\up{vol}}$}}
\newcommand{\Ver}{\mbox{$\textup{Ver}$}}
\newcommand{\Hor}{\mbox{$\textup{Hor}$}}
\newcommand{\dd}[2]{\mbox{$\frac{\partial #2}{\partial #1}$}}
\providecommand{\del}{\partial}
\newcommand{\om}{\omega}
\newcommand{\Om}{\Omega}
\newcommand{\var}{\varphi}
\newcommand{\eps}{\varepsilon}
\newcommand{\lam}{\lambda}
\newcommand{\by}[2]{\mbox{$\frac{#1}{#2}$}}
\providecommand{\set}[1]{\mbox{$\{#1\}$}}
\newcommand{\ao}{\textup{aut}_0}
\newcommand{\du}{\textup{diff}_0} 
\newcommand{\gu}{\mathfrak{g}}
\newcommand{\ko}{\mathfrak{k}}
\newcommand{\Ad}{\mbox{$\text{\upshape{Ad}}$}}
\newcommand{\ad}{\mbox{$\text{\upshape{ad}}$}}
\newcommand{\SO}{\mbox{$\textup{SO}$}}
\newcommand{\so}{\mbox{$\mathfrak{so}$}}
\newcommand{\X}{\mbox{$\mathcal{X}$}}
\newcommand{\aut}{\mbox{$\textup{aut}$}}
\newcommand{\diff}{\mbox{$\textup{diff}$}}
\newcommand{\gau}{\mbox{$\textup{gau}$}}
\newcommand{\G}{\mbox{$\mathcal{G}$}}
\title[Feedback control of charged ideal fluids]{%
Feedback control of charged ideal fluids}
\author{Simon Hochgerner}
\address{Finanzmarktaufsicht (FMA),
Otto-Wagner Platz 5, A-1090 Vienna
}
\email{simon.hochgerner@fma.gv.at} 
\begin{document}

\begin{abstract}
The theory of controlled mechanical systems of \cite{K85,BKMS92,BMS97,BLM01a,BLM01b} is extended to the case of ideal incompressible fluids consisting of charged particles in the presence of an external magnetic field.  
The resulting control is of feedback type and depends on the Eulerian state of the controlled system.  Moreover, the control is set up so that the corresponding closed loop equations are Lie-Poisson. This implies that the energy-momentum method of \cite{A66,HMRW85} can be used to find a stabilizing control.

As an example the case of planar parallel shear flow with an inflection point is treated. A state dependent feedback control is constructed which stabilizes the system for an arbitrarily long channel.
\end{abstract}

\maketitle

\section{Introduction}

\subsection{Plasma dynamics and stability problem}
A plasma is a collection of charged particles and is often described as a fluid mechanical system. 
The motion of the plasma fluid generates an electromagnetic field and, in turn, the electromagnetic field acts on the fluid through the Lorentz force.  
There exists  a hierarchy of (Hamiltonian) descriptions of plasma dynamics, including the Vlasov-Maxwell equations, the Navier-Stokes-Maxwell equations, magnetohydrodynamics, or the Euler-Poisson equations. 
The Navier-Stokes-Maxwell equations arise by coupling the Navier-Stokes to the Maxwell equations through the Lorentz force and the current density. 
If viscous effects can be neglected one obtains the Euler-Maxwell equations.

An important topic in plasma physics, and in particular in nuclear fusion research, is the study of stability of equilibria. Specifically, one is interested in controlling an externally generated electromagnetic field in such a way that the motion of the plasma fluid is given by a stable equilibrium. 
See  \cite{C84,Davidson,Lin69,MWRSS83,CGG00,GM14}.

This paper studies the stability control problem for a system which mimics the Euler-Maxwell system in the sense that the Euler equation is kept but instead of a self-consistent electromagnetic field subject to the Maxwell equations one considers only a constant magnetic field. Further, it is assumed that the fluid is incompressible and couples to the magnetic field $B$ by a charge $q$ that is carried along flow lines. The resulting system is described by the charged Euler equations (\cite{GBTV13}), which in $\R^3$ take the form 
\begin{align}
\label{1e:EM_u_d}
    \dot{u} + \nabla_{u}u
    &=  - \nabla p
        + q \, u\times B\\
\label{1e:div0}
    \textup{div}\,u
    &= 0 \\
\label{1e:cont_d}
    \dot{q} + \vv<u,\nabla q>
    &= 0
\end{align}
where $p$ is the pressure, determined by the requirement that  $\dot{u}$ in \eqref{1e:EM_u_d} is divergence free.  
While this system is a strong simplification of the Euler-Maxwell equations, two essential features are retained: The fluid velocity $u$ evolves according to the Euler equation and the charge transport~\eqref{1e:cont_d} follows from the Noether Theorem for the action of the gauge symmetry group. 

The observation of this paper is that the theory of \cite{K85,BKMS92,WK92,BMS97,BLM01a,BLM01b} (see Section~\ref{sec:intro-fb}) can be adapted to find a characterization of controls which stabilize a given \emph{previously unstable} equilibrium of 
the charged Euler system. The resulting stabilizing control, if it exists, yields stability in the nonlinear sense (following \cite{A66,HMRW85}) with respect to perturbations in the fluid's velocity.   

The controlled quantity in this approach is the charge $q$ and the control should be of feedback type with respect to the state of the system. Thus \eqref{1e:cont_d} is replaced by a control law $q = \mathcal{U}(u)$ which depends on observations of the fluid's velocity. 
The mathematical formulation of this procedure is described in Section~\ref{sec:introLP}. 
Physically, this means that the charge of the fluid is directly controlled at all times and each point in space. 
Due to the external magnetic field this influences the fluid motion according to \eqref{1e:EM_u_d}. An example where such a control can be explicitly found and leads to a stabilization of a (previously unstable) fluid equilibrium is the shear flow system in Section~\ref{sec:shear}.

The Euler-Maxwell equations can be derived from a Hamiltonian system, and the same is true for the charged Euler equations. However, the latter can be viewed, moreover, as a Lie-Poisson system (see Appendix~\ref{app:GM}). This is a structural simplification and makes the application of the Controlled Hamiltonian ideas \cite{K85,BKMS92,WK92,BMS97,BLM01a,BLM01b} (see Section~\ref{sec:intro-fb}) easier. But there is no a priori reason why the Controlled Hamiltonian approach should not also work for the full Euler-Maxwell system. Then the charge control would be realized by an externally impressed electric current. 

\subsection{Feedback control of mechanical systems with symmetries}\label{sec:intro-fb}
The feedback control method of Lagrangian or Hamiltonian mechanical systems with symmetries has been initiated in \cite{K85,BKMS92} and then further developed in \cite{WK92,BMS97,BLM01a,BLM01b} as-well as, more recently, in \cite{PB19}. The idea of this method consists in modifying the kinetic energy metric of a given mechanical system by means of a Kaluza-Klein construction. The modified metric then yields a new Hamiltonian system and the energy-momentum method of \cite{A66,HMRW85} can be used to find conditions on the Kaluza-Klein construction such that an unstable equilibrium for the uncontrolled system is (nonlinearly) stable for the new system. The method is set up in such a way that the Kaluza-Klein modification can be identified with a feedback control acting on the internal symmetry variables, and the new Hamiltonian system corresponds to the closed loop equations associated to the feedback control. Hence, the control stabilizes a given equilibrium if, and only if, it is stable for the modified Hamiltonian system.  
Stability shall be understood throughout in the nonlinear sense, as in \cite{HMRW85}.

To describe this idea in more detail, consider the example of a satellite with an internal rotor attached to the third principal axis. The configuration space of this system is $P = \SO(3)\times S^1$. Given moments of inertia $I_1>I_2>I_3$, the rotation of the satellite about the second axis is an unstable equilibrium. The kinetic energy of the system is the Hamiltonian function associated to a Kaluza-Klein metric $\mu_0^P$ on the $S^1$-principal bundle $P\to\SO(3)$ which is determined by the following three ingredients: a metric $\mu_0^S$ on $S=\SO(3)$; an inertia tensor $\I_0$ on $\R$ (viewed as the Lie algebra of $S^1$); a connection form $A_0: T\SO(3)\to\R$. Now the control approach of \cite{K85,BMS97,BLM01a,BLM01b} consists of modifying the data $(\mu_0^S,\I_0,A_0)$. This yields a new Kaluza-Klein metric $\mu_C^P$ on $P$, thus a new kinetic energy, and thus a new Hamiltonian system. Moreover, the modification can be identified with a feedback control of the form $q = -C\Pi$ where $q\in\R$ is the angular momentum of the rotor, $\Pi\in\R^3$ is the angular momentum of the satellite in the body representation and $C: \R^3\to\R$ is a linear map. Then it is shown that the closed loop equations associated to $C$ coincide with the Hamiltonian equations with respect to the kinetic energy Hamiltonian of $\mu_C^P$.  Therefore, the energy-momentum method can be used to find a control which stabilizes rotation of the satellite about the middle axis. The details of this example are described in Section~\ref{sec:rigidB}. 

The advantage of the method of controlled Hamiltonians is that it gives an algorithmic and explicit construction of feedback controls which stabilize a given (unstable) equilibrium. It should be noted, however, that this approach only yields stability with respect to perturbations after factoring out the internal symmetries. In the satellite example, this means that stability with respect to perturbations in the rotor variable cannot be concluded. A generalization to show stability in the full phase space has been carried out in \cite{BLM01b}, but this will not be further addressed in this paper.

\subsection{Lie-Poisson formulation of ideal flow of charged particles}\label{sec:introLP}
The theory of \cite{K85,BMS97,BLM01a,BLM01b} applies to mechanical systems where the configuration space is a direct product of two (finite dimensional) Lie groups. We extend this method to treat fluid dynamical systems defined on semi-direct products of infinite dimensional groups. 

To this end, equations \eqref{1e:EM_u_d}-\eqref{1e:cont_d} are reformulated as a Lie-Poisson system~\eqref{e:lp1}. This equivalence is detailed in Appendix~\ref{app:GM}. The Lie-Poisson system will be slightly more general by replacing the electromagnetic gauge symmetry group $S^1$ by an arbitrary finite dimensional compact Lie group $K$. 

More precisely, consider the group $\A$ of volume preserving automorphisms of a trivial principle bundle $P = M\times K \to M$ with base $M\subset\R^n$. This group is a semi-direct product
\[
 \A = \D\circledS\G
\]
where $\D = \textup{Diff}_0(M)$ is the group of volume preserving diffeomorphisms of $M$ and $\G=\F(M,\ko)$ are functions on $M$ with values in the Lie algebra $\ko$ corresponding to the Lie group $K$.  Let $\mu_0^M$ be the induced Euclidean metric on $M$, $\I_0\in\ko^*\otimes\ko^*$ an $\Ad(K)$-invariant symmetric positive definite bilinear form on $\ko$, and $A_0: TM\to\ko$ a connection form on the (trivial) principal bundle $P\to M$. Let $\mu_0^P$ denote the Kaluza-Klein metric on $P$ associated to $(\mu_0^M,\I_0,A_0)$ and $\vol_P$ the associated volume form. 
(See Appendix~\ref{app:GM} for the relevant definitions regarding Kaluza-Klein metrics and Lie-Poisson systems.) 
This gives rise to a right invariant $L^2$ metric $[\mu_0^P]$ on $\A$ and thus to a kinetic energy Hamiltonian $H_0: T^*\A = \A\times\ao^*\to\R$, $(\Phi,\eta)\mapsto\vv<\eta,[\mu_0^P]^{-1}\eta>/2$, where the trivialization $T^*\A = \A\times\ao^*$ follows from right multiplication in $\A$. Due to right invariance, this system can further be viewed as a Lie-Poisson system  
\begin{align*}
\tag{\ref{e:lp1}}
    \dot{\nu} 
    &= -\ad(u)^*\nu - X\diamond q, 
    \quad
    \dot{q}
    = -\rho^u(q) - \ad(X)^*q
    ,\quad 
    \left(\begin{matrix} 
    u \\ 
    X
    \end{matrix}\right)
    =   [\mu_0^P]^{-1}
    \left(\begin{matrix} 
    \nu \\ 
    q 
    \end{matrix}\right)
\end{align*}
on $\ao^* = \du^*\times\gau^*$; here $\du=T_e\D$ and $\gau=T_e\G$, $(u,\nu)\in\du\times\du^*$ and $(X,q)\in\gau\times\gau^*$. 
The first equation is the Euler equation for ideal incompressible fluid flow of charged particles in $M$ under the influence of the external Yang-Mills field $\textup{Curv}^{A_0} = dA_0 + [A_0,A_0]/2$ and the second equation represents conservation of charge. See \cite{GBTV13} for an Euler-Poincar\'e version of this equation and a discussion. 
We also refer to \cite{GHK83} for further background. However, contrary to \cite{GHK83}, we do not include  dynamical equations for the Yang-Mills field. Thus we assume that the motion of the fluid does not influence the field.

\subsection{Feedback control}
To extend the theory of \cite{BMS97,BLM01a,BLM01b} we construct a force $F$ acting on the charge variables $q$ which is of the form
\[
\tag{\ref{e:TCforce}}
 \frac{Dq}{dt} = -F(\nu,q)
\]
and we emphasize the $q$-dependence. This dependence is a new feature compared to the approach of \cite{BMS97,BLM01a,BLM01b}. It is necessary because of the $X\diamond q$ term in the dynamical equation, which, in turn, is due to the semi-direct product structure. 

Using an explicit expression for the force $F$, we obtain a new conserved quantity $p_0$. This allows to identify the corresponding control law as 
\[
\tag{\ref{e:PC3}}
 q = Tp - C\nu 
\]
where $p = \rho^{\phi^{-1}}\Ad(g^{-1})^*p_0$ is defined as the advection of $p_0$, $T:\gau^*\to\gau^*$ is an isomorphism and $C: \du^*\to\gau^*$ is a linear operator. It is shown that, if $T$ and $C$ satisfy the assumptions of Theorem~\ref{thm:1}, then the corresponding closed loop equations coincide with a forced Lie-Poisson system, associated to an explicit force term $f$ acting on the fluid momentum variables $\nu$ and a kinetic energy Hamiltonian $H_C: \ao^*=\du^*\times\gau^*\to\R$, $\eta\mapsto\vv<\eta,[\mu_C^P]^{-1}\eta>/2$. 
The construction is such that 
$[\mu_C^P]$ can be expressed as a Kaluza-Klein inner product on $\ao = \du\times\gau$ which arises as a modification $([\mu_C^M],\I_C,A_C)$ of the data $(\mu_0^M,\I_0,A_0)$. This is the content of Theorem~\ref{thm:1}. 

With the goal of obtaining a Lie-Poisson system on $\du^*$, and accompanying stabilization conditions, we set $p_0=0$. Since $p_0$ is a conserved quantity, this corresponds to a symplectic reduction of $T^*\A$ with respect to the cotangent lifted action of $\G$ at $p_0$, followed by a passage from $T^*\D$ to $\du^*$, which is the Poisson reduction with respect to the remaining $\D$-symmetry. 
In order to have an unforced Lie-Poisson system on $\du^*$, we look for controls such that induced force term $f$ vanishes for $p_0=0$. It turns out that this determines $C$ to be of the form 
\[
\tag{\ref{e:Cgamma}}
 C = \gamma R^{-1} \I_0 A_0 [\mu_0^M]^{-1}: \du^*\to\gau^*
\]
where $\gamma$ is a parameter such that $R: \gau^*\to\gau^*$,  $p\mapsto p - \gamma \I_0 A_0 [\mu_0^M]^{-1}[A_0^*p]$ is invertible; here $[A_0^*p]$ is the class of $A_0^*p = p\circ A_0\in\Om^1(M)$ in $\du^* = \Om^1(M)/d\F(M)$. In fact, as shown in Theorem~\ref{thm:M_CT}, this also fixes $T$ and we obtain 
\[
 \tag{\ref{e:lpa}}
 \dot{\nu} = \ad([\mu_C^M]^{-1}\nu)^*\nu
\]
which is a Lie-Poisson system on $\du^*$ with respect to the kinetic energy Hamiltonian $h_C(\nu) = \vv<\nu,[\mu_C^M]^{-1}\nu>/2$. Note that, once $(\mu_0^M,\I_0,A_0)$ are fixed, the only free parameter in \eqref{e:Cgamma} is $\gamma$. 

Assume now that $\nu_e$ is an unstable equilibrium of the (uncontrolled) Euler equation $\dot{\nu} = \ad([\mu_0^M]^{-1}\nu)^*\nu$. If $\nu_e$ is a (nonlinearly) stable equilibrium of the controlled system \eqref{e:lpa}, then the control $C$ yields stabilization of $\nu_e$ with respect to perturbations in the $\nu$-variables. Since \eqref{e:lpa} is Lie-Poisson, it is in particular Hamiltonian, and the techniques of \cite{A66,HMRW85} can be applied to find conditions on $h_C$ such that stability of $\nu_e$  follows. These conditions translate to explicit conditions on $C$. Therefore, the approach yields a constructive way to design stabilizing feedback controls. 

\begin{remark}[Controllabilty vs.\ stabilizing control]
Finite dimensional control theory makes heavy use of the notion of controllability. This concerns problems such as which states of a system can be reached, when starting from a given initial configuration, by undergoing a series of allowed motions or applying a set of controls. This theory is very well developed in finite dimensions (\cite{AS04,M02}) but there is not much literature on infinite dimensional aspects of controllability or attainability (\cite{AS08} contains a review).     
Consequently, the idea of stabilizing an equilibrium in fluid mechanics is largely independent from the theory of controllability. Furthermore, the control objective in plasma dynamics is confinement, i.e.\ ideally keeping the plasma as close as possible for as long as possible to a fixed initial configuration. Thus one is interested in controlling an external electromagnetic field such that the plasma fluid is held in a stable equilibrium (\cite{C84,Davidson,Lin69,CGG00}).
\end{remark}

\subsection{Feedback control of shear flow}
In Section~\ref{sec:shear} this method is applied to the example of ideal incompressible shear flow with a sinusoidal velocity profile $u_e(x,y) = (\sin(y+\by{\pi}{2}),0)$ in a channel $M = [0,X\pi] \times [0,Y\pi]$ where $Y<1$. This shear flow has an inflection point and is known to be a stable equilibrium of the Euler equation $\dot{\nu} = \ad([\mu_0^M]^{-1}\nu)^*\nu$, where $\mu_0^M$ is the Euclidean metric, if $X$ is sufficiently small. For large $X$, the equilibrium is unstable. Assuming that the fluid consists of charged particles in an external magnetic field $dA_0 = -a_0'(y)\,dx\wedge dy$ and $\I_0=1$, we explicitly construct a control which stabilizes the shear flow for arbitrarily large $X$. 
See figure~\ref{fig:channel}.

\begin{center}
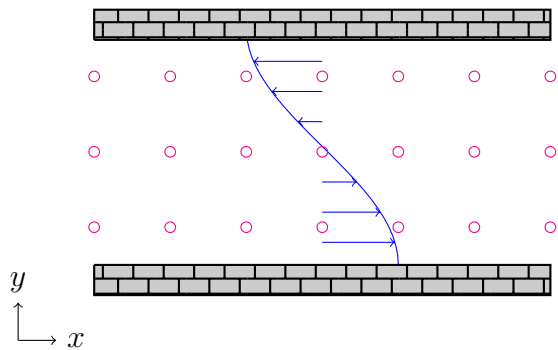
\begin{figure}[h]
\begin{tikzpicture}[domain=0:4]
    \draw[->] (-4,-1) -- (-3.5,-1) node[right] {$x$};
    \draw[->] (-4,-1) -- (-4,-0.5) node[above] {$y$};
    \draw[ domain = 0:(0.95*3.14159),
        color = blue,
        thin]   
        plot (  {sin((\x + 3.14159/2) r)}, \x) ;
    \draw[->, blue] (0,0.3) -- ( 0.95533688121 , 0.3);
    \draw[->, blue] (0,0.7) -- ( 0.76484304202 , 0.7);
    \draw[->, blue] (0,1.1) -- ( 0.45359730387, 1.1);
    \draw[->, blue] (0,1.9) -- ( -0.32328831131 , 1.9);
    \draw[->, blue] (0,2.3) -- ( -0.66627503188, 2.3);
    \draw[->, blue] (0,2.7) -- ( -0.90407157497 , 2.7);
        \foreach \x in {-3,-2,...,3} {
        \foreach \y in {0.5,1.5,...,2.5} {
            \draw[magenta] (\x,\y) circle (0.07cm);
        }
    }
    \def\wallA{ (-3,0) -- (3,0) -- (3,-0.4) --
        (-3,-0.4) -- (-3,0) }
        \draw[thick,fill=gray!40] \wallA;
        \pattern[pattern=bricks,pattern color=black]
        \wallA;
    \def\wallB{ (-3, 0.95*3.14159) -- (3,0.95*3.14159) --         (3,0.95*3.14159 + 0.4) -- (-3,0.95*3.14159 + 0.4) 
         -- (-3, 0.95*3.14159) }
        \draw[thick,fill=gray!40] \wallB;
        \pattern[pattern=bricks,pattern color=black]
        \wallB;
\end{tikzpicture}
\caption{The shear flow velocity profile of $u_e = u_e(y)\,e_x$ is shown in blue. The circles indicate the magnetic field which points out of the paper.
The channel width in this picture is $0.95\cdot\pi$.}
\label{fig:channel}
\end{figure}
\end{center}

Since the only free parameter in \eqref{e:Cgamma} is $\gamma$, this amounts to finding a vector  potential $A_0 = a_0(y)\,dx$ such that $C$ acts stabilizing for $\pm\gamma$ sufficiently large. An explicit formula for $A_0$, with $\gamma=1$, is given in Theorem~\ref{thm:shear_formal_stab}: $a_0(y) = b(\om_e(y))$ where $\om_e(y) = -\cos(y+\by{\pi}{2})$ is the vorticity function associated to $u_e$ and $b$ is a linear map defined in terms of the channel length $X\pi$ and width $Y\pi$.

\subsection{Structure of the paper}
Section~\ref{sec:rigidB} is a detailed exposition of the satellite example of \cite{BMS97,MarsdenPCL} mentioned above. The notation is chosen so that the comparison with Section~\ref{sec:FB_IS} is straightforward. Moreover, we emphasize a point that is mentioned in \cite{BMS97,MarsdenPCL}, but not explained very prominently: the identification of the closed loop equations with the Hamiltonian equations associated to the modified Kaluza-Klein metric involves also a change in the momentum variable. See Remarks~\ref{rem1} and \ref{rem3}.
Because of the semi-direct product structure, it is necessary to systematize this change in the momentum variables by the use of the isomorphism $T: \gau^*\to\gau^*$ in Section~\ref{sec:FB_IS}.

Section~\ref{sec:FB_IS} contains the extension of the theory of controlled Hamiltonians to the case of incompressible ideal fluids under the influence of an external Yang-Mills field. The main results are Theorem~\ref{thm:1}, which provides the link between the closed loop equations and the (forced) Lie-Poisson system, and Theorem~\ref{thm:M_CT}, which shows that there is a control $q=T p - C \nu$ such that the reduction to $\du^*$ yields a true (unforced) Lie-Poisson system~\eqref{e:lpa}; moreover $C$ and $T$ are given explicitly. 

Section~\ref{sec:stab} collects some results concerning the stability of equilibria from \cite{A66,AK98,HMRW85} and provides the context so that these results can be applied to the system~\eqref{e:lpa}. 

Section~\ref{sec:RBagain} treats again the satellite with a rotor example, but this time from the combined point of view of Sections~\ref{sec:FB_IS} and \ref{sec:stab}. 

Section~\ref{sec:stab_2D} provides further background on the stability of equilibria for two-dimensional flows. Thus we assume that $M\subset\R^2$ and adapt the two-dimensional results of \cite{A66,AK98} to the case of Section~\ref{sec:FB_IS}. 

Section~\ref{sec:shear} contains the shear flow example. We consider incompressible ideal flow of charged particles in an external magnetic field. In Theorem~\ref{thm:shear_formal_stab} it is shown how the approach of Section~\ref{sec:FB_IS}, together with the background from Section~\ref{sec:stab_2D}, yields a control on the charge such that the flow is stabilized with respect to perturbations in the fluid momentum variables.

\subsection*{Acknowledgements}
I am very grateful to Darryl Holm for drawing my attention to \cite{BMS97,MarsdenPCL} and all the patient explanations, and to Florian Gach for all the helpful discussions.

\section{Feedback control of the rigid body with a rotor}\label{sec:rigidB}
This section is a detailed account of the satellite with a rotor example in \cite{BMS97,MarsdenPCL}. Its purpose is twofold: Firstly, to compare the construction of Section~\ref{sec:FB_IS}. Secondly, to explain that the control mechanism of \cite{BMS97,MarsdenPCL} involves not only a Kaluza-Klein construction, but also a change in the momentum variable (Remarks~\ref{rem1} and \ref{rem3}).

\subsection{The free system}\label{sec:free_sat}
Let $S=\textup{SO}(3)$, $K=S^1$ and $P=S\times K$ the configuration space of the rigid body with one rotor about the third principal axis. 
Let $I_1 > I_2 > I_3$ be the rigid body moments of inertia and $i_1 = i_2 > i_3$ those of the rotor. We use left multiplication in the direct product group $P$ to write the tangent bundle $TP \cong P\times\mathfrak{so}(3)\times\R \cong P\times\R^4$ in body coordinates $(g,\alpha,\Omega,x)$. This means that $g$ gives the orientation of the body, $\Omega$ the body angular velocity, $\alpha$ the relative angle of the rotor and $x$ the rotor angular velocity. The metric tensor is
\begin{equation}
{\mu^P_0}
 =
    \left(\begin{matrix}
    \lam_1 & & & \\
     & \lam_2 & & \\
     & & \lam_3 & i_3 \\
     & & i_3 & i_3
    \end{matrix}\right)
    \textup{ with inverse }
    (\mu^P_0)^{-1}
 =
    \left(\begin{matrix}
    \lam_1^{-1} & & & \\
     & \lam_2^{-1} & & \\
     & & I_3^{-1} & -I_3^{-1} \\
     & & -I_3^{-1} & i_3^{-1}+I_3^{-1}
    \end{matrix}\right)
\end{equation}
where $\lam_j = I_j+i_j$.
The  equations of motion are determined by the free Hamiltonian system 
$(T^*P, \Om^{T^*P}, H_0)$ where $\Om^{T^*P}$ is the canonical symplectic form, and
\begin{equation}
    H_0: 
    \left(\begin{matrix}\Pi \\ 
     q  \end{matrix}\right)
     \mapsto 
     \frac{1}{2}\left<
     (\mu^P_0)^{-1}  \left(\begin{matrix}\Pi \\ 
     q  \end{matrix}\right),
         \left(\begin{matrix}\Pi \\ 
     q  \end{matrix}\right) \right>
\end{equation}
where $\Pi$ and $q$ are the body and rotor angular momenta, respectively.
Since $P$ is a Lie group, the equations of motion are given by 
\begin{equation}
\label{e:lpRB}
     \left(\begin{matrix}
     \dot{\Pi} \\ 
     \dot{q}  
     \end{matrix}\right)
     =
    \ad
    \left(\begin{matrix}
    \Omega \\ 
     x  \end{matrix}\right)^*
    \left(\begin{matrix}\Pi \\ 
     q  \end{matrix}\right),
     \qquad
       \left(\begin{matrix}
    \Omega \\ 
     x  \end{matrix}\right)
     = (\mu_0^P)^{-1}   
     \left(\begin{matrix}\Pi \\ 
     q  \end{matrix}\right)
     = 
    \left(\begin{matrix}
    g^{-1}\dot{g} \\ 
    \dot{\alpha}   
    \end{matrix}\right).
\end{equation}
Explicitly, since $\ad(a)^*b = -\ad(a)b = -[a,b]$ and $(\mathfrak{so}(3),[.,.]) = (\R^3,\times)$, this is
\begin{align}
    \dot{\Pi} 
    &= -\Om\times\Pi 
    = 
    \left(\begin{matrix}
    (-\lam_2^{-1}+I_3^{-1})\Pi_2\Pi_3 - I_3^{-1} q\Pi_2 \\
    (\lam_1^{-1}-I_3^{-1})\Pi_1\Pi_3 + I_3^{-1} q\Pi_1 \\
    (-\lam_1^{-1} + \lam_2^{-1})\Pi_1\Pi_2 
    \end{matrix}\right)
    \label{e:eom1}\\
    \dot{q} &= 0.   \label{e:eom2}
\end{align}
Consider the action by $K=S^1$ on $P=\textup{SO}(3)\times K$ given by the action on the second factor. This action leaves the metric $\mu^P_0(.,.) = \vv<{\mu^P_0}.,.>$ invariant and $\pi: P\to S$ is a principal bundle.
There is a natural connection on this bundle given by 
the splitting
\begin{equation}
    TP = \textup{Hor}_0 \oplus \textup{Ver},
\end{equation}
where $\textup{Ver} = \textup{ker}(T\pi)$ and $\textup{Hor}_0$ is the orthogonal complement with respect to $\mu^P_0$. This is the so-called mechanical connection. Let  
\begin{equation}
    A_0: TS\to \ko = \R,\;
    \Om\mapsto \Om_3
\end{equation}
denote the associated (local) connection form. The associated (local) curvature form is denoted by $K_0 = dA_0$. Note that this is a two-from on $S$ and satisfies
\begin{equation}\label{e:K0}
    i(\Omega)K_0 
    =
    \left(\begin{matrix}
    \Om_2\\
    -\Om_1\\
    0
    \end{matrix}\right)
\end{equation}
The system $(T^*P,\Om^{T^*P},H_0)$ is invariant under the cotangent lifted action of $K$. 
Let 
\[
\textup{Hor}^* := \textup{Ann}(\textup{Ver})
\textup{ and }
\textup{Ver}_0^* := \textup{Ann}(\textup{Hor}_0)
\]
Note that $\textup{Hor}^*$ is canonically defined, while $\textup{Ver}_0^*$ depends on the choice of connection. 
Consider the connection dependent isomorphism
\begin{equation}
    \Psi_0:
    T^*P \cong \textup{Hor}^*
    \oplus\textup{Ver}_0^*
    \cong P\times_{S}T^*S \times \ko^* 
\end{equation}
given by 
\begin{equation}
     \left(\begin{matrix}
     \Pi \\ 
     q  
     \end{matrix}\right)
    \mapsto
      \left(\begin{matrix}
      \Pi_1 \\
      \Pi_2 \\
      \Pi_3 -  q\\ 
      0  
      \end{matrix}\right)
    \oplus
    \left(\begin{matrix}
      0 \\
      0 \\
      q \\ 
      q  
    \end{matrix}\right)
      \mapsto
      (\Phi, p_0)
\end{equation}
where $\Phi_1=\Pi_1$, $\Phi_2=\Pi_2$, $\Phi_3 = \Pi_3-q$ and $p_0 = q$. In these coordinates the equations of motion \eqref{e:eom1}, \eqref{e:eom2} become
\begin{align}
    \dot{\Phi}  
    &= 
    \left(\begin{matrix}
    (-\lam_2^{-1}+I_3^{-1})\Phi_2\Phi_3 - \lam_2^{-1}p_0\Phi_2 \\
    (\lam_1^{-1}-I_3^{-1})\Phi_1\Phi_3 + \lam_1^{-1}p_0\Phi_1 \\
    (-\lam_1^{-1} + \lam_2^{-1})\Phi_1\Phi_2 
    \end{matrix}\right)
    =
    \left(\begin{matrix}
    (-\lam_2^{-1}+I_3^{-1})\Phi_2\Phi_3  \\
    (\lam_1^{-1}-I_3^{-1})\Phi_1\Phi_3 \\
    (-\lam_1^{-1} + \lam_2^{-1})\Phi_1\Phi_2 
    \end{matrix}\right)
    -
    p_0 \left(\begin{matrix}
    \lam_2^{-1}\Phi_2 \\
    -\lam_1^{-1}\Phi_1 \\
    0 
    \end{matrix}\right)
    \notag \\
    &=
    \ad((\mu_0^S)^{-1}\Phi)^*.\Phi 
    - \vv<p_0, i((\mu_0^S)^{-1}\Phi)K_0>
    \label{e:eom3}\\
    \dot{p}_0 &= 0 .  \label{e:eom4}
\end{align}
where $\mu_0^S$ is defined as follows.
Let $\textup{hl}^0: P\times TS\to\textup{Hor}_0$ denote the horizontal lift map associated to $A_0$, that is $\textup{hl}^0: \Om\mapsto(\Om,-\Om_3)$. Define the induced metric $\mu_0^S$ on $S$ by 
\begin{equation}
\label{e:muh}
    \mu_0^S(\Om,\tilde{\Om}) 
    = \mu^P_0( \textup{hl}^0_{\Om} , \textup{hl}^0_{\tilde{\Om}} ) 
    = \vv<\diag{\lam_1,\lam_2,I_3}\,\Om, \tilde{\Om}> 
\end{equation}
such that $\pi: (P,\mu^P_0)\to (S,\mu^S_0)$ is a Riemannian submersion.  
Consider the associated Hamiltonian function
\begin{equation}
    h_0: 
    T^*S\times\ko^* \to \R,
    (\Phi,p_0) \mapsto 
    \by{1}{2}\vv<({\mu}_0^S)^{-1}\Phi,\Phi> + \by{1}{2}i_3^{-1} p_0^2. 
\end{equation}
Then equations \eqref{e:eom3} and \eqref{e:eom4} are the Hamiltonian equations associated to $h_0$ and the following direct product Poisson structure: on $\ko^* = \R$ we consider the trivial Poisson structure (whose symplectic orbits are points); on $T^*S$ we consider the magnetic symplectic form
\begin{equation}
 \Om^0 = \Om^{T^*S} - \vv<p_0, K_0>     
\end{equation}
Indeed, this follows immediately from ${\mu}_0^S = \textup{diag}(\lam_1,\lam_2,I_3)$ and equation~\eqref{e:K0}. 

This construction can be summed up by saying that $(T^*S\times\set{p_0},\Om^0,h_0)$ is the Hamiltonian reduction of $(T^*P,\Om^{T^*P},H_0)$ at $p_0$ with respect to the $K$-action.

\subsection{Feedback control via magnetic reduction}
Consider the controlled equations 
\begin{align}
    \dot{\Pi} 
    &= -\Om\times\Pi 
    = 
    \left(\begin{matrix}
    (-\lam_2^{-1}+I_3^{-1})\Pi_2\Pi_3 - I_3^{-1} q\Pi_2 \\
    (\lam_1^{-1}-I_3^{-1})\Pi_1\Pi_3 + I_3^{-1} q\Pi_1 \\
    (-\lam_1^{-1} + \lam_2^{-1})\Pi_1\Pi_2 
    \end{matrix}\right)
    \label{e:c-eom1}\\
    \dot{q} &= \mathcal{U}   \label{e:c-eom2}
\end{align}
where $\mathcal{U}$ is the control. Following \cite{BMS97,MarsdenPCL} we show how certain feedback controls $\mathcal{U}$ can be obtained from a Kaluza-Klein construction. 
Let $k$ be a parameter and $\var_k$ a number such that $\var_0=1$. Then we define a new (local) connection form
\begin{equation}
    A_k: TS \to \ko,\;
    \Om\mapsto \var_k \Om_3
\end{equation}
giving rise to a new horizontal bundle $\textup{Hor}_k = \set{(\Om,x)\in TP: x = -\var_k\Om_3}$. We use the splitting $TP = \textup{Hor}_k\oplus\textup{Ver}$ to define a new metric $\mu^P_k$ on $P$:
\begin{itemize}
    \item 
    Let $\mu_k^S$ be a metric on $S$. We require that $T\pi: \textup{Hor}_k\to TS$ is an isometry. This defines $\mu_k^P$ on horizontal vectors.
    \item
    Let $\mathbb{I}_k$ be an inner product on $\ko$. We require that $\mu_k^P(\zeta_x,\zeta_y) = \mathbb{I}_k(x,y)$ for all $x,y\in\ko$, where $\zeta:\ko\to\X(P)$ is the fundamental vector field map associated to the $K$-action.
    \item
    $\textup{Hor}_k$ and $\textup{Ver}$ shall be orthogonal with respect to $\mu_k^P$.
\end{itemize}
The orthogonality condition is important since we want $A_k$ to be a mechanical connection in order to apply the (magnetic) Hamiltonian reduction procedure.
We call $\mu_k^P = \mu^{KK}(\mu_k^S,\I_k,A_k)$ the Kaluza-Klein metric associated to $(\mu_k^S,\I_k,A_k)$.
Further, the metric $\mu_k^S$ should be left-invariant and we assume that the metric tensor 
\begin{equation}\label{e:muS}
\mu_k^S
 =
    \left(\begin{matrix}
    \tilde{\lam}_1 & &  \\
     & \tilde{\lam}_2 & \\
     & & \tilde{I}_3  \\
    \end{matrix}\right)
\end{equation}
is of diagonal form.
Since $\ko=\R$ the vertical part of the metric is determined by a number $\mathbb{I}_k>0$.
Let $H_k$ denote the natural Hamiltonian with respect to $\mu_k^P$. Since $(T^*P,\Om^{T^*P},H_k)$ is still invariant under the $K$-action, with the same momentum map $J:T^*P\to\ko^*$, we can carry out Hamiltonian reduction at a level $\tilde{p}_k\in\ko^*$. This yields, exactly as above, the equations of motion
\begin{align}
    \dot{\Phi}  
    &=
    \left(\begin{matrix}
    (-\tilde{\lam}_2^{-1}+\tilde{I}_3^{-1})\Phi_2\Phi_3  \\
    (\tilde{\lam}_1^{-1}-\tilde{I}_3^{-1})\Phi_1\Phi_3 \\
    (-\tilde{\lam}_1^{-1} + \tilde{\lam}_2^{-1})\Phi_1\Phi_2 
    \end{matrix}\right)
    -
    \var_k\tilde{p}_k \left(\begin{matrix}
    \tilde{\lam}_2^{-1}\Phi_2 \\
    -\tilde{\lam}_1^{-1}\Phi_1 \\
    0 
    \end{matrix}\right)
    \label{e:c-eom3}\\
    \dot{\tilde{p}}_k &= 0 .  \label{e:c-eom4}
\end{align}
Let us now assume $\tilde{\lam}_1 = \lam_1$, $\tilde{\lam}_2=\lam_2$, $\tilde{I}_3 = (1-k)^{-1}I_3$ and $\tilde{p}_k = (1-k)^{-1}\var_k^{-1}p_k$. 
With the assignment $\Pi_1=\Phi_1$, $\Pi_2=\Phi_2$ and $\Pi_3=\Phi_3+\var_k\tilde{p}_k$, which  corresponds to the $A_k$-dependent isomorphism $\textup{Hor}^*\oplus\textup{Ver}^*_k\to T^*P$, 
equations~\eqref{e:c-eom3} can be rearranged to give 
\begin{align}
    \dot{\Pi}  
    &=
    \left(\begin{matrix}
    -\lam_2^{-1}\Pi_2\Pi_3 + I_3^{-1}\Pi_2((1-k)\Pi_3 - p_k) \\
    \lam_1^{-1}\Pi_1\Pi_3 - I_3^{-1}\Pi_1((1-k)\Pi_3 - p_k) \\
    (-\lam_1^{-1} + \lam_2^{-1})\Pi_1\Pi_2 
    \end{matrix}\right)
    \label{e:c-eom5}
\end{align}
which are the closed loop equations corresponding to \eqref{e:c-eom1}, \eqref{e:c-eom2} with respect to the control  
$\mathcal{U} = k(-\lam_1^{-1} + \lam_2^{-1})\Pi_1\Pi_2 = k\dot{\Pi}_3$. 
That is 
\begin{equation}
\label{e:controlPk}
    q = p_k+k\Pi_3 
\end{equation}
for a constant $p_k$.
The control law yields a new conserved quantity $p_k=q-k\Pi_3$ and, because of \eqref{e:diag1}, this implies that $\var_k$ is given by \eqref{e:vark}. 

\begin{remark}\label{rem1}
Note that we had to change the momentum value from $p_k = q-k\Pi_3$ to $\tilde{p}_k$ such that $\var_k\tilde{p}_k = (1-k)^{-1}p_k$. 
This means that the controlled equations \eqref{e:c-eom5} are \emph{not} obtained by replacing \eqref{e:eom3} with
$\dot{\Phi} 
 = \ad((\mu_k^S)^{-1}\Phi)^*\Phi - \vv<p_k, i((\mu_k^S)^{-1}\Phi)K_k>$, 
where $K_k = \var_k K_0$ is the curvature of $A_k = \var_k A_0$. 
This is consistent with \cite{BMS97,MarsdenPCL} and the factor of $1-k$ is mentioned in the sentence immediately after \cite[Equ.~(3.7)]{BMS97}.
\end{remark}

\begin{remark}[Physical significance of the control]\label{rem:phys}
The control \eqref{e:c-eom2} means that the angular velocity of the rotor which spins about the third principal axis, as introduced at the beginning of Section~\ref{sec:free_sat}, is controlled. This control is of feedback type and depends on observations of the carrier rigid body's angular momentum. The physical intuition behind this technique (as-well as applications in the attitude control of satellites) are described in detail in \cite{K85}.
\end{remark}

\subsection{Controlling the conserved quantity: Lie-Poisson approach}
Fix a $k$-dependent linear map $C_k: \so(3)^*=\R^3\to\ko^* = \R$ and consider
\begin{equation}
 \label{e:C_k}
 J_k: T^*P\to\ko^*,\;
 (\Pi,q)\mapsto J(\Pi,q) + C_k(\Pi) = q+C_k(\Pi).
\end{equation}
In the above example, we have 
\begin{equation}
\label{e:C_k1}
    C_k(\Pi) = -k\Pi_3.
\end{equation} 
The map $C_k$ determines the feedback control law by requiring $p_k = J_k(\Pi,q)$ to be constant. Thus, $C_k$ provides a new conserved quantity, it is however not a momentum map. The momentum map remains unchanged and is $J: (\Pi,q)\mapsto q$.  With $C_k(\Pi)=-k\Pi_3$ and notation as above, the conservation of $J_k$ can be linked to a conservation law associated to a Kaluza-Klein metric $\mu_k^P = \mu^{KK}(\mu_k^S,\I_k,A_k=\var_k A_0)$ if, and only if, the following diagram commutes:
\begin{equation}\label{e:diag1}
\xymatrix
   { 
   (\Om,x)
        \ar@{=}[r]
        \ar@{|->}[d]^{{\mu}_0^P} 
    & (\Om,x)
        \ar@{|->}[d]^{{\mu}_k^P}
   \\
   (\lam_1\Om_1,\lam_2\Om_2,\lam_3\Om_3+i_3x,i_3\Om_3+i_3x)
        \ar@{|->}[d]^{J_k}
   &
   (\tilde{\lam}_1\Om_1,\tilde{\lam}_2\Om_2,
   (\tilde{I}_3+\var_k^2\mathbb{I}_k)\Om_3 + \var_k\mathbb{I}_k x 
   , \var_k\mathbb{I}_k\Om_3 + \mathbb{I}_k x)
        \ar@{|->}[d]^{J}
   \\
   i_3(1-k)x+(i_3-k\lam_3)\Om_3
        \ar@{==}[r]
   & \mathbb{I}_k(\var_k\Om_3 + x)
   }
\end{equation}
This holds if $\mathbb{I}_k = i_3(1-k)$ and  
\begin{equation}
    \label{e:vark}
    \var_k = \mathbb{I}_k^{-1}(i_3-k\lam_3). 
\end{equation}
In terms of angular velocities, the feedback control law $J_k = const.$ is thus 
\begin{equation}
    \dot{x} = -\frac{i_3-k\lam_3}{i_3(1-k)}\dot{\Om}_3.
\end{equation}

Now, the map \eqref{e:C_k1} yields a conserved quantity $p_k$, such that 
$\dot{q} + C_k(\dot{\Pi}) = \dot{p}_k = 0$. 
The closed loop equations associated to the corresponding control 
$q = p_k - C_k(\Pi)$ and \eqref{e:lpRB} are
\begin{equation}
\label{e:loop1}
    \dot{\Pi} 
    = \ad\Big((\pr_1\circ({\mu}_0^P)^{-1})(\Pi,q)\Big)^*\Pi
    = \ad\Big((\pr_1\circ({\mu}_0^P)^{-1})(\Pi,p_k-C_k(\Pi))\Big)^*\Pi
\end{equation}
Thus we have to find a constant $\tilde{p}_k$ and $\mu_k^P = \mu^{KK}(\mu_k^S,\I_k,A_k)$ such that 
\begin{equation}
    \dot{\Pi} 
    = \ad\Big((\pr_1\circ({\mu}_k^P)^{-1})(\Pi,\tilde{p}_k)\Big)^*\Pi
    = \eqref{e:loop1}
\end{equation}
Because $\mathbb{I}_k$ and $A_k$ are already determined by \eqref{e:diag1} we have to find a suitable $\mu^S_k$. 
Consider again the connection dependent isomorphism
\begin{equation}
    \Psi_k: TP\to \textup{Hor}_k\oplus\textup{Ver},\,
    (\Om,x)\mapsto ((\Om,-\var_k\Om_3); (0,x+\var_k\Om_3))
\end{equation}
and the dual isomorphism 
\begin{equation}
    \Psi_k^*: T^*P\to \textup{Hor}^*\oplus\textup{Ver}_k^*,\;
    (\Pi,q)\mapsto((\Pi_1,\Pi_2,\Pi_3-\var_k q,0); (0,0,\var_k q,q))
\end{equation}
Notice that 
\begin{equation}\label{e:metricF_RB}
    ({\mu}_k^P)^{-1}
    = \Psi_k^{-1}\circ
    \left(\begin{matrix}
    ({\mu}_k^S)^{-1} & 0\\
    0 & \mathbb{I}_k^{-1}
    \end{matrix}\right)
    \circ
    \Psi_k^*. 
\end{equation}
Therefore, the defining equation for $\mu_k^S$ and $\tilde{p}_k$ is 
\begin{align*}
    (\pr_1\circ({\mu}_0^P)^{-1})(\Pi,p_k-C_k(\Pi))
    &= (\lam_1^{-1}\Pi_1,\lam_2^{-1}\Pi_2,I_3^{-1}((1-k)\Pi_3 - p_k))\\
    &= (\pr_1\circ\Psi_k^{-1})\Big((({\mu}_k^S)^{-1}\oplus\mathbb{I}_k^{-1})((\Pi_1,\Pi_2,\Pi_3-\tilde{p}_k); p_k)\Big) \\
    &= ({\mu}_k^S)^{-1} (\Pi_1,\Pi_2,\Pi_3-\tilde{p}_k)
\end{align*}
which yields equation~\eqref{e:muS} with $\tilde{\lam}_1 = \lam_1$, $\tilde{\lam}_2=\lam_2$, $\tilde{I}_3 = (1-k)^{-1}I_3$, that is
\begin{equation}\label{e:muS2}
\mu_k^S
 =
    \left(\begin{matrix}
    \lam_1 & &  \\
     & \lam_2 & \\
     & & (1-k)^{-1}I_3  \\
    \end{matrix}\right)
\end{equation}
and 
\begin{equation}
    \label{e:pk_tilde}
    \tilde{p}_k 
    = (1-k)^{-1}\var_k^{-1}p_k
    = \frac{i_3}{i_3-k\lam_3}p_k. 
\end{equation}

\begin{remark}
The law \eqref{e:C_k} means that $\dd{t}{}J = -C_k(\dot{\Pi}) = k(\lam_1^{-1}-\lam_2^{-1})\Pi_1\Pi_2$. Thus the control is physically given by a force acting in the symmetry direction. Compare with \cite[Equation~(1.10)]{BMS97}.
\end{remark}

It follows that the closed loop equation~\eqref{e:loop1} for $\Pi$ and with control constant $p_k$ is equivalent to the system of Hamiltonian (Lie-Poisson) equations 
\begin{equation}
    \label{e:HamLoop1}
    \frac{\del}{\del t}\left(\begin{matrix}
    \Pi \\
    \tilde{p}_k
    \end{matrix}\right)
    =
    \ad\left((\mu^P_k)^{-1}\left(\begin{matrix}
    \Pi \\
    \tilde{p}_k
    \end{matrix}\right)\right)^*
    \left(\begin{matrix}
    \Pi \\
    \tilde{p}_k
    \end{matrix}\right)
\end{equation}
on the direct product dual Lie algebra $\so(3)^*\times\R$. 

\begin{remark}\label{rem3}
Note that not only the Kaluza-Klein data $(\mu_0^S,\I_0,A_0)$ are changed but also the value of the conserved quantity. Compare with Remark~\ref{rem1}. 
\end{remark}

\subsection{Stabilization about the middle axis}\label{sec:RBstab}
Equation~\eqref{e:HamLoop1} is the Lie-Poisson version of Hamiltonian equations (with magnetic term) \eqref{e:c-eom3} and \eqref{e:c-eom4}. This means that the energy-momentum method can be used to analyze stability of equilibria. 

If $\tilde{p}_k = (1-k)^{-1}\var_k^{-1}p_k = 0$, the first line of equation~\eqref{e:HamLoop1} is the Lie-Poisson equation of motion of the rigid body with respect to the moment of inertia tensor~\eqref{e:muS2}. Thus $(0,M,0)^{\top}$ is an equilibrium solution, and this equilibrium is stable if $(1-k)^{-1}I_3 = \tilde{I}_3 > \tilde{\lam}_2 = \lam_2$:

\begin{proposition}[Prop.~3.1 in \cite{BMS97}]\label{prop:RBstab}
Let $1 > k > 1-I_3/\lam_2$ (i.e.\  $(1-k)^{-1}I_3 > \lam_2$) and $q=k\Pi_3$ (i.e.\ $\tilde{p}_k=0$), then the control $C_k(\Pi) = -k\Pi_3$ stabilizes the motion such that $(0,M,0)^{\top}$ becomes a (nonlinearly) stable equilibrium. 
\end{proposition}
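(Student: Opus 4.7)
The plan is to invoke the Arnold energy-Casimir method for the Lie-Poisson system on $\so(3)^*$ obtained from \eqref{e:HamLoop1} by restricting to the invariant level $\tilde p_k = 0$. At this level the $\tilde p_k$-equation is trivially satisfied, and the $\Pi$-equation decouples into the free rigid-body Lie-Poisson equation on $\so(3)^*$ with Hamiltonian $h_k(\Pi) = \tfrac{1}{2}\vv<(\mu_k^S)^{-1}\Pi,\Pi>$ and inertia tensor $\mu_k^S = \textup{diag}(\lam_1,\lam_2,(1-k)^{-1}I_3)$ from \eqref{e:muS2}. The task therefore reduces to proving Lyapunov stability of $\Pi_e = (0,M,0)^{\top}$ for a free rigid body with these modified moments.

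To execute the energy-Casimir step I would form $F_\Phi(\Pi) = h_k(\Pi) + \Phi(C(\Pi))$, where $C(\Pi) = \tfrac{1}{2}|\Pi|^2$ is the standard Casimir of $\so(3)^*$ and $\Phi:\R\to\R$ is an auxiliary smooth function; $F_\Phi$ is conserved under the flow for every choice of $\Phi$. A direct computation shows that $dF_\Phi|_{\Pi_e} = 0$ iff $\Phi'(M^2/2) = -1/\lam_2$, and that under this choice the Hessian is the diagonal matrix
\begin{equation*}
 d^2 F_\Phi|_{\Pi_e} \;=\; \textup{diag}\Bigl(\tfrac{1}{\lam_1} - \tfrac{1}{\lam_2},\ M^2\Phi''(M^2/2),\ \tfrac{1-k}{I_3} - \tfrac{1}{\lam_2}\Bigr).
\end{equation*}
The first diagonal entry is negative since $\lam_1 = I_1 + i_1 > I_2 + i_2 = \lam_2$ (using $I_1 > I_2$ and $i_1 = i_2$); the third is negative iff $(1-k)\lam_2 < I_3$, equivalently $(1-k)^{-1}I_3 > \lam_2$, which is exactly the hypothesis $k > 1 - I_3/\lam_2$; and the middle entry is made negative by choosing any $\Phi$ satisfying the first-order condition together with $\Phi''(M^2/2) < 0$. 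Hence $F_\Phi$ has a strict local maximum at $\Pi_e$, and since $F_\Phi$ is conserved along solutions, its superlevel sets form a nested family of invariant neighborhoods of $\Pi_e$ whose diameters shrink to zero, yielding Lyapunov stability.

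The argument is essentially the classical Arnold proof of rigid-body stability about an extreme principal axis, so I do not expect any real obstacle. The only conceptual point worth highlighting is that the bound $k > 1 - I_3/\lam_2$ is precisely what converts $e_2$ from the middle axis of the physical inertia $(\lam_1,\lam_2,I_3)$ into an extreme (here, smallest) axis of the controlled inertia $(\lam_1,\lam_2,(1-k)^{-1}I_3)$, so that the energy-Casimir method, which fails for the uncontrolled system, succeeds after the Kaluza-Klein modification. The accompanying charge law $q = k\Pi_3$ enforces $p_k = 0$, equivalently $\tilde p_k = 0$ by \eqref{e:pk_tilde}, which is precisely the level on which the reduction to the pure $\so(3)^*$ Lie-Poisson equation in the first step was performed.
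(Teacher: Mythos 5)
Your proposal is correct and follows essentially the same route as the paper: restrict \eqref{e:HamLoop1} to the invariant level $\tilde p_k=0$ to obtain the free rigid body with modified inertia \eqref{e:muS2}, then apply the energy--Casimir method with $g_{\var}(\Pi)=\var(|\Pi|^2/2)$, the first-order condition $\var'(M^2/2)=-\lam_2^{-1}$, and negative definiteness of the second variation under $(1-k)^{-1}I_3>\lam_2$ — exactly the computation the paper sketches in Sections~\ref{sec:RBstab} and \ref{sec:RBagain}. No gaps; your Hessian entries and sign analysis check out.
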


The rotor about the third axis increases the third moment of inertia (in the abstract system~\eqref{e:HamLoop1}) and if we turn it fast enough, the moment of inertia about the third axis becomes larger than that about the second axis, yielding new stability properties.

\section{Feedback control of fluids with internal symmetries}\label{sec:FB_IS}

\subsection{Semi direct product structure}\label{sec:SD_struc}
Let $K$ be a compact finite dimensional Lie group and $M$ a compact domain, possibly with boundary,  in $\R^n$. Consider the trivial principal bundle $P := M\times K \to M$ where the principal bundle action is given by right multiplication in the group. Let $\mu_0^M = \vv<.,.>$ denote the induced Euclidean metric on $M\subset\R^n$ and $\I_0$ a symmetric positive definite bilinear form on $\ko$ which is $\Ad(k)$-invariant. We fix a connection form $A_0: TM\to\ko$. The horizontal space $\textup{Hor}_0\subset TP = M\times K\times\R^n\times\ko$ is thus $\textup{Hor}_0 = \set{(u,k,u_x,X): X+A_0(x)u_x = 0}$.
Denote the  Kaluza-Klein metric on $M\times K$ associated to $(\mu_0^M,\I_0,A_0)$ 
by
\begin{equation}
    \label{e:mu0}
    \mu_0^P = \mu^{KK}(\mu_0^M,\I_0,A_0).
\end{equation}
For $(u_x,X), (v_x,Y)\in T_{(x,e)}P = \R^n\times\ko$ this means
\[
 \mu_0^P\Big((u_x,X),(v_x,Y)\Big) 
 = \Big\langle u_x,v_x \Big\rangle + \I_0\Big(X + A_0(x)(u_x),Y + A_0(x)(v_x)\Big).
\]
We refer to a triple, such as   $(\mu_0^M,\I_0,A_0)$, consisting of a Riemannian metric, a symmetric $\Ad(K)$-invariant positive definite bilinear form and a connection as a set of Kaluza-Klein data on the principal bundle $P\to M$.
See Appendix~\ref{app:GM}.
Let $\textup{vol}_P$, $\textup{vol}_M = dx$ denote the volume forms on $P$, $M$ with respect to $\mu_0^P$, $\mu_0^M$ respectively.

\begin{remark}
In the following, the data $(\mu_0^M,\I_0,A_0)$ will be changed. However the volume forms will be kept fixed throughout. Thus the divergence of $v\in\X(M)$ will be with respect to $\textup{vol}_M$, while the divergence of $v\in\X(P)$ will be with respect to $\textup{vol}_P$.
\end{remark}

For $k\in K$, let $r^k: P\to P$, $(x,g)\mapsto(x,gk)$ denote the principal right action. Consider the volume preserving automorphisms  
\[
 \A := \textup{Aut}_0(P)
 := \set{\Phi\in\textup{Diff}(P):
 \Phi\circ r^k = r^k\circ\Phi \; \forall k\in K \And \Phi^*\textup{vol}_P=\textup{vol}_P} 
\]
which, as explained in \cite{GBTV13}, can be identified as 
\[
 \A = \textup{Diff}_0(M)\circledS\F(M,K) = \D\circledS\G
 \]
where $\D := \textup{Diff}_0(M)$ is the set of $\textup{vol}_M$-preserving diffeomorphisms and
$\G := \F(M,K)$ denotes functions from $M$ to $K$ (of a fixed differentiability class which we do not specify). The semi-direct product structure is given in \eqref{e:s-d}. 
Composition from the right gives rise to a right  representation 
\[
 \rho: \D \to \textup{Aut}(\G),\quad
 \phi\mapsto \rho^{\phi}
\]
where $\rho^{\phi}(g) = g\circ\phi$. 
 
Consider the action by point-wise right multiplication $R^g: \G\to\G$, $h\mapsto hg$.
The  induced right action on $\A$ is again denoted by $R$:
\begin{equation}\label{e:s-d}
 R^{(\psi,g)}(\phi,h) = (\phi\cdot\psi,R^g(\rho^{\psi}(h))). 
\end{equation}
In particular, $\A\to\A/\G = \D$ is a right principal $\G$ bundle. 
We also consider the right trivializations
\begin{equation}
    \label{e:triv}
    T\A \cong \A\times\ao = \D\times\du\times\G\times\gau
    \textup{ and }
    T^*\A \cong \A\times\ao^* = \D\times\du^*\times\G\times\gau^* 
\end{equation}
using the right multiplication in $\A$, where $\ao = T_e\A$, 
$\du = T_e\D = \X_0(M)$ are divergence free vector fields tangent to the boundary  
and $\gau=T_e\G = \F(M,\ko)$. Here, $\ao^*$, $\du^*$ and $\gau^*$ denote the smooth part of the dual.

We have $\du^* = \Om^1(M)/d\mathcal{F}(M)$ and $\gau^* = \F(M,\ko^*)$. The pairings are given by 
\begin{align*}
    &\du^*\times\du\to\R,\quad ([\Pi],u)\mapsto\int_M\vv<\Pi_x,u_x>\,dx\\
    &\gau^*\times\gau\to\R,\quad (q,X)\mapsto\int_M\vv<q_x,X_x>\,dx
\end{align*}
where $[\Pi]$ is the class of $\Pi\in\Om^1(M)$.
By definition, the smooth duals are the isomorphic images of the maps
\[
 [\mu_0^M]: \du\to\du^*,\quad
  u \mapsto [\mu_0^M(u)]
\]
and 
$
 \I: \gau\to\gau^*
$,
where $[\mu_0^M(u)]$ is the class of $\mu_0^M(u)\in\Om^1(M)$ in $\Om^1(M)/d\mathcal{F}(M) = \du^*$. The inverse is 
\[
 [\mu_0^M]^{-1}: [\Pi]\mapsto \mathcal{P}(\mu_0^{-1}(\Pi))
\]
where $\Pi$ is a representative of $[\Pi]$ and $\mathcal{P}$ is the Helmholtz-Hodge-Leray projection.
For a vector field $u\in\X(M)$ the Helmholtz-Hodge-Leray projection is divergence free, tangent to the boundary, and given by $\mathcal{P}(u) = u - \nabla g$ where $g$ is determined by $\Delta g = \textup{div}\,u$ with Neumann boundary conditions.


The representation $\rho$ gives rise to an infinitesimal representations
$\rho^{\phi}X$ and 
$\rho^u(X) = dX.u = L_u X = \nabla_u X$ with $\phi\in\D$, $u\in\du$ and $X\in\gau$. The corresponding coadjoint representations are given by $\rho^{\phi}(q) = (\rho^{\phi^{-1}})^*(q)$ and $\rho^u(q) = (\rho^{-u})^*(q)$ with $q\in\gau^*$. 

We define the bracket $[.,.]$ on $\du$ (and similarly for $\ao$) to be the negative of the usual Lie bracket:
$
 [u,v] := -\nabla_{u}v + \nabla_{v}u
$
where $\nabla_{u}v=\vv<u,\nabla>v$ and $u,v\in\du$. This choice of sign is compatible with \cite{A66,AK98}. 
Further, we define the operator $\ad(u).v = [u,v]$. Its dual is $\ad(u)^*[\Pi] = [\Pi\circ\ad(u)]$ for $[\Pi]\in\du^*$.

\begin{remark}
Contrary to Section~\ref{sec:rigidB}, we now perform all calculations in the right trivialization. 
\end{remark}

\subsection{Lie-Poisson structure}
Let 
\begin{equation}
\label{e:h_0}
    h_0: \ao^*\to\R, \quad 
    ([\Pi],q)^{\top} \mapsto \by{1}{2}\vv<([\Pi],q)^{\top} , [\mu_0^P]^{-1}([\Pi],q)^{\top}>
\end{equation}
denote the natural kinetic energy Hamiltonian.
The corresponding Lie-Poisson equations are 
\begin{align}
    \dot{[\Pi]} 
    &= -\ad(u)^*[\Pi] - X\diamond q 
    \label{e:lp1}
    \\
    \dot{q}
    &= -\rho^u(q) - \ad(X)^*q
    \label{e:lp2}\\
     \left(\begin{matrix} 
    u \\ 
    X
    \end{matrix}\right)
    &=   [\mu_0^P]^{-1}
    \left(\begin{matrix} 
    [\Pi] \\ 
    q 
    \end{matrix}\right)
    \label{e:lp3}
\end{align}
where the notation for $\rho$ is explained in Section~\ref{sec:SD_struc}.
The diamond $\diamond: \gau\times\gau^*\to\du^*$ is a bilinear map defined by
\begin{equation}
    \label{e:diamond}
    \vv<X\diamond q, u>_1 
    = \vv<q, \rho^u X>_2 
    = \vv<q, \dd{t}{}|_0 X(\exp( t u))>_2
    = \int_M \vv<q, L_u X>_3\,\vol_M
    = \int_M \vv<q, \nabla_u X>_3\,\vol_M
\end{equation}
where $\vv<.,.>_i$ for $i=1,2,3$ stands for the duality pairing on $\du^*\times\du$, $\gau^*\times\gau$, $\ko^*\times\ko$, respectively. 
Since $\ko$ is finite dimensional, the Lie derivative $L_u$ is applied component wise, and it coincides with the covariant derivative $\nabla_u$ because $\mu_0^M = \vv<.,.>$ is the Euclidean inner product. 
The Poisson structure on $\aut_0^*$ that underlies these Lie-Poisson equations is presented in Appendix~\ref{app:GM}. 

\begin{remark}[Charged fluid]\label{rem:YM_fluid}
The system~\eqref{e:lp1}, \eqref{e:lp2}, \eqref{e:lp3} is the Lie-Poisson version of \cite[Section~4.2]{GBTV13}, with the difference that \cite{GBTV13} are more general in the sense that they allow inner products that involve positive symmetric differential operators $Q_1: \X(M)\to\X(M)^*$ and $Q_2: \gau\to\gau^*$. We will need this generality in \eqref{e:Clp5}, \eqref{e:Clp6} below.  These equations describe the motion of an incompressible fluid consisting of charged particles in an external Yang-Mills field $\M = \textup{Curv}^{A_0} = dA_0 + \by{1}{2}[A_0,A_0]$. If $K$ is abelian, then $\M = dA_0$ is a magnetic field. Equation~\eqref{e:lp2} says that charge is conserved along the flow.  
\end{remark}

\subsection{Lie-Poisson approach: controlling the conserved quantity}
Let from now on $\nu = [\Pi]\in\du^*$ and define $[A_0^*]: \gau^*\to\du^*$, $q\mapsto[A_0^*q]$. 
The cotangent lifted right action by $\A$ on $T^*\A$ induces an equivariant momentum map
\begin{align}\label{e:momap1}
J: T^*\A = \D\times\G\times\du\times\gau &\to\du^*\times\gau^*,
(\phi,g,\nu,q)^{\top} \mapsto (\Ad(\phi,g)^{\top})^*(\nu,q)^{\top} 
\end{align}
where 
\begin{align*}
    \left<
    J\left(
    \left(\begin{matrix}
        \phi\\
        g
        \end{matrix}
        \right),
        \left(\begin{matrix}
        \nu\\
        q
        \end{matrix}
        \right)
        \right),
        \left(\begin{matrix}
        u\\
        X
        \end{matrix}
        \right)
    \right>
    &= 
    \left<
     \left(\begin{matrix}
        \nu\\
        q
        \end{matrix}
        \right),
    \left(\begin{matrix}
        \Ad(\phi)u\\
        \rho^{\phi^{-1}}\Big(
        TR^{g^{-1}}.Tg.u + \Ad(g)X
        \Big)
        \end{matrix}
        \right)
     \right> \\
     &=
        \left<
     \left(\begin{matrix}
        \Ad(\phi)^*\nu + 
        (Tg)^*(TR^{g^{-1}})^*\rho^{\phi}(q)\\
        \Ad(g)^*\rho^{\phi} (q)
        \end{matrix}
        \right),
        \left(\begin{matrix}
        u\\
        X
        \end{matrix}
        \right)
     \right> 
\end{align*}
The momentum map $J_{\mathcal{G}}$ with respect to the cotangent lifted $\G$ action is therefore given by 
\begin{equation}
 J_{\mathcal{G}}(\phi,g,\nu,q)
 = \pr_2 J(\phi,g,\nu,q)
 = (\Ad(g)^*\circ\rho^{\phi})(q)
\end{equation}
where $\pr_2: \du^*\times\gau^*\to\gau^*$ is the projection.

As in \eqref{e:C_k} we want to do work in the direction of the internal symmetry. Thus we consider a linear operator
\begin{equation}
    C: \du^*\to\gau^*.
\end{equation}


Define $F: T^*\A = \A\times\ao^*\to\gau^*$ through
\begin{align}
    \label{e:forceF1}
    F(\phi,g,\nu,q)
    &= \Ad(g)^*\rho^{\phi}\Big(
    (\rho^u+\ad(X)^*)C\nu - C(\ad(u)^*\nu + X\diamond q)\Big) \\
    &=
    \pr_2\left(
    \Ad\left(\begin{matrix} 
    \phi \\ 
    g
    \end{matrix}\right)^*
    \left(
    \ad\left(\begin{matrix} 
    u \\ 
    X
    \end{matrix}\right)^*
    \left(\begin{matrix} 
    0 \\ 
    C\nu
    \end{matrix}\right)
    -
    \left(\begin{matrix} 
    0 \\ 
    C(\ad(u)^*\nu + X\diamond q)
    \end{matrix}\right)
    \right)
    \right)
\end{align}
where $(u,X)=[\mu_0^P]^{-1}(\nu,q)$.
Assume that $(\phi_t,g_t,\nu_t,q_t)\in\D\times\G\times\du^*\times\gau^*$ is a curve such that:
\begin{enumerate}
    \item 
    $(\phi_0,g_0) = (e,e)$ and $(\dot{\phi},\dot{g}) = TR^{(\phi,g)}(u,X)$ where $R$ is the $\rho$-dependent semi-direct product right multiplication on $\A$ and $(u,X) = [\mu_0^P]^{-1}(\nu,q)$.
    \item
    \eqref{e:lp1} holds.
\end{enumerate}
The dynamical equation for $q$ is now given by applying the force $F$ such that 
\begin{equation}
    \dd{t}{}J_{\mathcal{G}}(\phi,g,\nu,q)
    = -F(\phi,g,\nu,q).
\end{equation}
Because of \eqref{e:lp1} 
it follows that 
$F(\phi,g,\nu,q) = \frac{\del}{\del t}\Ad(g)^*\rho^{\phi}(C\nu)$,
whence we obtain a new conserved quantity, namely 
\begin{equation}
    \label{e:P0}
    J_{\mathcal{G}}(\phi,g,\nu,q) + \pr_2\Ad(\phi,g)^*(0,C\nu)
    = \Ad(g)^*\rho^{\phi}(q + C\nu)
    = p_0 
    = \textup{const}.
\end{equation}
As in Section~\ref{sec:rigidB}, we introduce a new variable $p$:
\begin{equation}\label{e:PC}
    p_t := (\rho^{\phi_t^{-1}}\circ\Ad(g_t^{-1})^*)p_0
    = q_t + C\nu_t 
\end{equation}
where we have added the subscript $t$ to highlight the time dependence.
Therefore, the control law following from the force~\eqref{e:forceF1} is 
\begin{equation}
    \label{e:controlF1}
    q_t = p_t - C\nu_t
\end{equation}
which should be compared with equation~\eqref{e:controlPk}.
The resulting equations of motion are 
\begin{align}
    \dot{\nu} 
    &= -\ad(u)^*\nu - X\diamond q 
    \label{e:Clp1}
    \\
    \dot{q}
    &= -\rho^u(q) - \ad(X)^*q
        - (\rho^u)^*C\nu  -  \ad(X)^*C\nu 
        + C\Big(\ad(u)^*\nu + X\diamond q\Big) 
    \label{e:Clp2}\\
  \left(\begin{matrix} 
    u \\ 
    X
    \end{matrix}\right)
    &=   [\mu_0^P]^{-1}
    \left(\begin{matrix} 
    \nu \\ 
    q 
    \end{matrix}\right)
    \label{e:lp2a}
\end{align}
 We think of $(\nu,q)$  as the physical variables which are controlled, while $(\nu,p)$ are abstract variables which should be described by means of a Hamiltonian (Lie-Poisson) system.
In terms of the latter, equations~\eqref{e:Clp1} and \eqref{e:Clp2} can be rewritten as
\begin{align}
    \dot{\nu} 
    &= -\ad(u)^*\nu - X\diamond(p-C\nu)
    \label{e:Clp3}
    \\
    \dot{p}
    &= -\rho^u(p) - \ad(X)^*p
    \label{e:Clp4}
\end{align}
where \eqref{e:lp2a} is changed to
\begin{equation}
\label{e:muC1}
    \left(\begin{matrix} 
    u \\ 
    X
    \end{matrix}\right)
    =
    [\mu_0^P]^{-1}\left(\begin{matrix} 
    \nu \\ 
    p - C\nu
    \end{matrix}\right)
    =
    [\mu_0^P]^{-1}
    \left(\begin{matrix} 
    1 & 0 \\ 
    -C & 1
    \end{matrix}\right)
    \left(\begin{matrix} 
    \nu \\ 
    p 
    \end{matrix}\right).
\end{equation}
Equation~\eqref{e:Clp3} is the closed loop equation associated to \eqref{e:lp1}, \eqref{e:lp3} and the feedback control \eqref{e:controlF1}.

\begin{remark}\label{rem:LP_C}
The above equations \eqref{e:Clp3}, \eqref{e:Clp4} and \eqref{e:muC1} almost look like a Lie-Poisson system. The problem is that the matrix
\[
   [\mu_0^P]^{-1}
    \left(\begin{matrix} 
    1 & 0 \\ 
    -C & 1
    \end{matrix}\right)
\]
is not symmetric, thus this term does not give rise to a metric operator. If the diamond term in \eqref{e:Clp3} was not present, we could introduce a new variable $\tilde{p}$ to account for this asymmetry. This was the approach of Section~\ref{sec:rigidB}. See Remarks~\ref{rem1} and \ref{rem3}. However, because of the diamond term, this does not work for the semi-direct product structure. 
\end{remark}

\subsection{Lie-Poisson approach: controlling the conserved quantity with $T$ and $C$}\label{sec:LP_CT}
As in \eqref{e:C_k} we want to do work in the direction of the internal symmetry. Thus we consider a linear map 
\begin{equation}
    C: \du^*\to\gau^*.
\end{equation}
Let $T: \gau^*\to\gau^*$ be an isomorphism. 

Define $F: \ao^* = \du^*\times\gau^*\to\gau^*$ through
\begin{align}
    \label{e:forceF2}
    F(\nu,q)
    := 
    &- C\Big(\ad(u)^*\nu+X\diamond q\Big) 
    + T\Big(\ad(X)^*+\rho^u\Big)(T^{-1}C\nu)\\
    &+ \Big(T(\ad(X)^*+\rho^u)T^{-1} - \ad(X)^*-\rho^u\Big)q
    \notag
\end{align}
where $(u,X)^{\top}=[\mu_0^P]^{-1}(\nu,q)^{\top}$.

\begin{proposition}\label{prop:TCforce}
Let $(\phi_t,g_t,\nu_t,q_t)\in T^*\A$ be a curve such that $(\phi_t,\nu_t)$ solves \eqref{e:lp1} subject to \eqref{e:lp3} and $(\dot{\phi},\dot{g}) = TR^{(\phi,g)}(u,X)$. Let $(\phi_0,g_0)=(e,e)$.
The following are equivalent.
\begin{enumerate}
    \item 
    With $\by{Dq}{dt} = \dot{q} + \ad(X)^*q + \rho^u q$,
    \begin{equation}
    \label{e:TCforce}
        \frac{Dq}{dt} = -F(\nu,q).
    \end{equation}
    \item
    The following quantity is conserved: 
    \begin{equation}
        \label{e:PC2}
        \Ad(g_t)^*\rho^{\phi_t}( T^{-1}q_t + T^{-1}C\nu_t )
        = p_0 
        = \textup{const}.
    \end{equation}
\item
    The quantity $p_t := T^{-1}q_t + T^{-1}C\nu_t$
    satisfies
    \begin{equation}
    \label{pdot_TC}
        \dot{p} = -\rho^u p - \ad(X)^*p.
    \end{equation}
\end{enumerate}
\end{proposition}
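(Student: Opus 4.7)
The plan is to prove the circular chain $(2) \Leftrightarrow (3) \Rightarrow (1) \Rightarrow (3)$, but in fact it is more efficient to establish the two equivalences $(2)\Leftrightarrow(3)$ and $(1)\Leftrightarrow(3)$ independently, since they rely on different ingredients: the first is a purely kinematic fact about advection along the semi-direct product flow, while the second is algebraic, using only the evolution equation~\eqref{e:lp1} for $\nu$.

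\textbf{Step 1: $(2)\Leftrightarrow(3)$.} Setting $p_t := T^{-1}q_t + T^{-1}C\nu_t$, statement (2) is the assertion that the map $t\mapsto\Ad(g_t)^*\rho^{\phi_t}(p_t)$ is constant. Since $(\dot\phi,\dot g)=TR^{(\phi,g)}(u,X)$ and $(\phi_0,g_0)=(e,e)$, the standard formulas for the derivatives of the coadjoint representation and of the dual representation $\rho^*$ yield
\[
 \frac{d}{dt}\Big(\Ad(g_t)^*\rho^{\phi_t}(p_t)\Big)
 = \Ad(g_t)^*\rho^{\phi_t}\Big(\dot p_t + \rho^{u}p_t + \ad(X)^*p_t\Big).
\]
Because $\Ad(g_t)^*\circ\rho^{\phi_t}$ is an isomorphism, the vanishing of the left hand side is equivalent to $\dot p = -\rho^u p - \ad(X)^*p$, which is precisely (3). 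This also shows (2) $\Leftrightarrow$ $p_t = (\rho^{\phi_t^{-1}}\circ\Ad(g_t^{-1})^*)p_0$.

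\textbf{Step 2: $(1)\Leftrightarrow(3)$.} From the definition $Tp_t = q_t + C\nu_t$ we have $T\dot p = \dot q + C\dot\nu$. Substituting the given equation \eqref{e:lp1}, $\dot\nu = -\ad(u)^*\nu - X\diamond q$, gives
\[
 \dot q = T\dot p + C\bigl(\ad(u)^*\nu + X\diamond q\bigr).
\]
Adding $\ad(X)^*q+\rho^u q$ to both sides yields
\[
 \frac{Dq}{dt} = T\dot p + C\bigl(\ad(u)^*\nu + X\diamond q\bigr) + \bigl(\ad(X)^*+\rho^u\bigr)q.
\]
If (3) holds, then $T\dot p = -T(\rho^u+\ad(X)^*)T^{-1}(q+C\nu)$, and substituting and rearranging $q+C\nu$ into its two summands gives exactly $-F(\nu,q)$ as defined in \eqref{e:forceF2}. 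Conversely, if (1) holds, then the same rearrangement shows $T\dot p = -T(\rho^u+\ad(X)^*)T^{-1}(Tp)$; since $T$ is invertible, this is equivalent to~(3).

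\textbf{Main obstacle.} The calculations are essentially bookkeeping, but care is needed with the semi-direct product conventions: one must use the correct infinitesimal co-representations (recall $\rho^u(q)=(\rho^{-u})^*q$ in the sign convention of the paper), and one must consistently distinguish the actions of $\Ad(g)^*$ on $\gau^*$ from the representation $\rho^\phi$ of $\D$ on $\gau^*$. The cleanest way to organise the computation in Step~1 is to compute $\tfrac{d}{dt}\rho^{\phi_t}$ and $\tfrac{d}{dt}\Ad(g_t)^*$ separately from the definitions, noting that the mixed semi-direct product action on the pair $(\nu,q)$ factorises neatly on the second component since $p$ is a $\gau^*$-valued variable. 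Once these derivatives are written down, the equivalence of (2) with the ODE in (3) is immediate, and the identification of $-F(\nu,q)$ with the expression produced in Step~2 is a direct match of terms.
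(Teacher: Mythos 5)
Your proposal is correct and follows essentially the same route as the paper: the equivalence $(2)\Leftrightarrow(3)$ via differentiating $\Ad(g_t)^*\rho^{\phi_t}(p_t)$ and using that this composite is an isomorphism, and $(1)\Leftrightarrow(3)$ by the algebraic substitution of \eqref{e:lp1} into $T\dot p = \dot q + C\dot\nu$ and matching terms against \eqref{e:forceF2}. The only difference is cosmetic — you solve for $Dq/dt$ in terms of $T\dot p$ whereas the paper solves for $\dot p$ in terms of $Dq/dt$ — which does not change the argument.
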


\begin{proof}
(2) and (3) are equivalent, since $p_t = T^{-1}q_t + T^{-1}C\nu_t$ and 
\[
 \dd{t}{}\Ad(g_t)^*\rho^{\phi_t} p_t
 = \Ad(g_t)^*\rho^{\phi_t}(\dot{p}_t + \rho^{u_t} p_t + \ad(X_t)^*p_t).
\]
Further, (1) implies (3) because, using \eqref{e:lp1},
\begin{align*}
    \dot{p}
    &= T^{-1}\dot{q} + T^{-1}C\dot\nu
    = T^{-1}\Big(\frac{Dq}{dt} - \ad(X)^*q - \rho^u q + C\dot{\nu}\Big)
    \\
    &= T^{-1}\Big(
        - C\dot{\nu}
        - T(\ad(X)^*+\rho^u)(T^{-1}C\nu)\\
    &\phantom{===}
        - (T(\ad(X)^*+\rho^u)T^{-1} - \ad(X)^* - \rho^u)q\\
    &\phantom{===}
        - \ad(X)^*q - \rho^u q + C\dot{\nu}\Big)\\
    &= 
    -(\ad(X)^*+\rho^u)T^{-1}C\nu
    -(\ad(X)^*+\rho^u)T^{-1}q
\end{align*}
and the converse direction follows by calculating $\dot{q}$ from this equation.
\end{proof}

Thus the control law associated to exerting the force \eqref{e:TCforce} is
\begin{equation}
    \label{e:PC3}
    q_t = Tp_t - C\nu_t
\end{equation}
where, as in \eqref{e:PC}, $p_t : = \rho^{\phi_t^{-1}}\Ad(g_t^{-1})^*p_0$.
The controlled equations of motion follow from \eqref{e:lp1}, \eqref{pdot_TC} and \eqref{e:lp3}, and are
\begin{align}
    \dot{\nu} 
    &= -\ad(u)^*\nu - X\diamond (Tp - C\nu)  
    \label{e:Clp3a}
    \\
    \dot{p}
    &= -(\rho^u + \ad(X)^*)p
    \label{e:Clp4a}\\
      \left(\begin{matrix} 
    u \\ 
    X
    \end{matrix}\right)
    &=
    [\mu_0^P]^{-1}
    \left(\begin{matrix} 
    \nu \\ 
    Tp - C\nu
    \end{matrix}\right)
    =
    [\mu_0^P]^{-1}
    \left(\begin{matrix} 
    1 & 0 \\ 
    -C & T
    \end{matrix}\right)
    \left(\begin{matrix} 
    \nu \\ 
    p 
    \end{matrix}\right).
    \label{e:lp-fb-law}
\end{align}
Since $\mu_0^P = \mu^{KK}(\mu_0^M,\I_0,A_0)$, Lemma~\ref{pr:metric1} implies 
\begin{equation}
\label{e:muCT1}
    [\mu_0^P]^{-1}
    \left(\begin{matrix} 
    1 & 0 \\ 
    -C & T
    \end{matrix}\right)
    =
    \left(\begin{matrix} 
    [\mu_C^M]^{-1} & -[\mu_0^M]^{-1}[A_0^*]T \\ 
    -A_0[\mu_C^M]^{-1} -\I_0^{-1}C & Q_0T
    \end{matrix}\right)
\end{equation}
where 
\begin{equation}
\label{e:mu_C^M}
     [\mu_C^M] := (1+[A_0^*]C)^{-1}[\mu_0^M]
\end{equation}
and  
\begin{equation}
    Q_0 := \I_0^{-1}+A_0[\mu_0^M]^{-1}[A_0^*]. 
\end{equation}
We assume that $1+[A_0^*]C$ is invertible.

\begin{remark}\label{rem:mu_C}
Note that we did not assume that $1+[A_0^*]C$ is a local operator. In fact, $C$ may involve a convolution, as in the example in Section~\ref{sec:shear}. Thus $[\mu_C^M]$ need not come from a Riemannian metric $\mu_C^M$ on $M$. Nevertheless, it still makes sense to require \eqref{e:muCT1} to define a Kaluza-Klein inner product on $\ao$ as in Lemma~\ref{pr:metric1}. This is the situation of \cite[Equ.~(6)]{GBTV13}, compare with Remark~\ref{rem:YM_fluid}. 
\end{remark}

Owing to Lemma~\ref{pr:metric1}, for the expression \eqref{e:muCT1} to be a Kaluza-Klein inner product on $\ao$, with the connection form
\begin{align}
     \label{e:TCcond1}
    A_C
    := A_0  +\I_0^{-1}C [\mu_C^M]
     = A_0 + \I_0^{-1}C(1+[A_0^*]C)^{-1}[\mu_0^M],
\end{align}
the following have to hold:
\begin{align}
    \label{e:TCcond1b}
    A_C[\mu_C^M]^{-1}
    &= T^*A_0[\mu_0^M]^{-1}\\
    \label{e:TCcond2}
    \I_C^{-1}  
    &= \I_0^{-1}T - (A_C-A_0)[\mu_C^M]^{-1}[A_C^*]
\end{align}
which is equivalent to
\begin{align}
    \label{e:TCcond1a}
    [A_0^*]T 
    &= [\mu_0^M][\mu_C^M]^{-1}[A_C^*]
    = (1+[A_0^*]C)[A_C^*] \\
    \label{e:TCcond2a}
    \I_C^{-1}
    &= \I_0^{-1}(T-C[A_C^*])
\end{align}
where $T-C[A_C^*]$ is assumed to be invertible.
Recall that we assume throughout that $(\mu_0^M,\I_0,A_0)$ is a Kaluza-Klein triple which means, in particular, that $\I_0$ is $\Ad(K)$-invariant. We obtain: 

\begin{theorem}
\label{thm:1}
Assume $1+[A_0^*]C$, $T$ and $T-C[A_C^*]$ are invertible, and that $T$ satisfies \eqref{e:TCcond1a}. 
Let $L_1 := (1+[A_0^*]C)^{-1}$ and $L_2 := (T-C[A_C^*])^{-1}$ and assume further that
\begin{equation}
    L_1^* = [\mu_0^M]^{-1}L_1[\mu_0^M],
    \quad
    L_2^* = \I_0^{-1}L_2\I_0
\end{equation}
and $\Ad(k)^*L_2\I_0\Ad(k) = L_2\I_0$ for all $k\in K$.
Let 
 \begin{equation}
    \label{e:muC}
    [\mu_C^P] = \mu^{KK}([\mu_C^M],\I_C,A_C).
\end{equation}
be the Kaluza-Klein inner product on $\ao$ associated to $[\mu_C^M] = L_1[\mu_0^M]$, $\I_C = L_2\I_0$ and $A_C$ as defined by  \eqref{e:TCcond1}. Then 
\begin{equation}
    [\mu_C^P]^{-1} 
    =  
    [\mu_0^P]^{-1}
    \left(\begin{matrix} 
    1 & 0 \\ 
    -C & T
    \end{matrix}\right)
\end{equation}
and equations~\eqref{e:Clp3a}, \eqref{e:Clp4a}, \eqref{e:lp-fb-law} can be written as a forced Lie-Poisson system
\begin{align}
    \dot{\nu} 
    &= -\ad(u)^*\nu - X\diamond p  + f 
    \label{e:Clp5}
    \\
    \dot{p}
    &= -\rho^u(p) - \ad(X)^*p
    \label{e:Clp6}\\
      \left(\begin{matrix} 
    u \\ 
    X
    \end{matrix}\right)
    &=
    [\mu_C^P]^{-1}\left(\begin{matrix} 
    \nu \\ 
    p 
    \end{matrix}\right)
\end{align}
with a force term
\begin{equation}
\label{e:lpf}
    f :=  - X\diamond ((T-1)p - C\nu)  
\end{equation}
\end{theorem}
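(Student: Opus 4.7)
The plan is to treat Theorem~\ref{thm:1} as an algebraic matching problem between the proposed Kaluza-Klein triple $([\mu_C^M],\I_C,A_C)$ and the explicit block expression \eqref{e:muCT1}. I will proceed in three steps: first verify that $([\mu_C^M],\I_C,A_C)$ are admissible Kaluza-Klein data; then apply Lemma~\ref{pr:metric1} to $[\mu_C^P]$ and compare the resulting inverse block-by-block with \eqref{e:muCT1}; and finally rewrite the closed-loop equations \eqref{e:Clp3a}--\eqref{e:lp-fb-law} in forced Lie-Poisson form.

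For the first step, the symmetry hypotheses $L_1^* = [\mu_0^M]^{-1}L_1[\mu_0^M]$ and $L_2^* = \I_0^{-1}L_2\I_0$ are precisely what makes $[\mu_C^M] = L_1[\mu_0^M]$ and $\I_C = L_2\I_0$ symmetric; the condition $\Ad(k)^*L_2\I_0\Ad(k) = L_2\I_0$ is exactly $\Ad(K)$-invariance of $\I_C$; and $A_C$ defined by \eqref{e:TCcond1} is a $\ko$-valued connection on the trivial bundle $P = M\times K$, since the correction $\I_0^{-1}C[\mu_C^M]$ is tensorial and $\Ad(K)$-equivariance is automatic by triviality. Positive definiteness is packaged into the invertibility assumptions, with the caveat of Remark~\ref{rem:mu_C} that $[\mu_C^M]$ and $\I_C$ may be non-local operators in the general sense of \cite{GBTV13}.

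For the second step, Lemma~\ref{pr:metric1} provides
\[
[\mu_C^P]^{-1} = \begin{pmatrix} [\mu_C^M]^{-1} & -[\mu_C^M]^{-1}[A_C^*] \\ -A_C[\mu_C^M]^{-1} & \I_C^{-1} + A_C[\mu_C^M]^{-1}[A_C^*] \end{pmatrix}.
\]
The four blocks match \eqref{e:muCT1} as follows: the $(1,1)$-block by $[\mu_C^M] = (1+[A_0^*]C)^{-1}[\mu_0^M]$; the $(2,1)$-block by the definition \eqref{e:TCcond1} of $A_C$; the $(1,2)$-block by the compatibility \eqref{e:TCcond1a} assumed on $T$; and the $(2,2)$-block, after substituting the previous identities, reduces to $\I_C^{-1} = \I_0^{-1}(T - C[A_C^*])$, which is precisely the definition $\I_C = L_2\I_0$.

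For the third step, the identification of $[\mu_C^P]^{-1}$ from the second step rewrites \eqref{e:lp-fb-law} as $(u,X)^{\top} = [\mu_C^P]^{-1}(\nu,p)^{\top}$; equation \eqref{e:Clp4a} is literally \eqref{e:Clp6}; and \eqref{e:Clp3a} rearranges into $\dot{\nu} = -\ad(u)^*\nu - X\diamond p + f$ with $f = -X\diamond((T-1)p - C\nu)$, matching \eqref{e:lpf}. The main obstacle is conceptual rather than computational: one must recognize that the hypothesis package on $(C,T,L_1,L_2)$ is exactly the dictionary under which \eqref{e:muCT1} becomes a \emph{bona fide} Kaluza-Klein inverse rather than a formal matrix inverse. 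With the dictionary in hand, the block identifications in the second step are forced and the third step is a single algebraic rearrangement.
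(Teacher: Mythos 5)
Your proposal is correct and follows essentially the same route as the paper: the paper leaves the proof implicit in the derivation immediately preceding the theorem (the computation of \eqref{e:muCT1} via Lemma~\ref{pr:metric1} and the identification of conditions \eqref{e:TCcond1a}, \eqref{e:TCcond2a} as exactly what is needed for that block matrix to be the inverse of a Kaluza--Klein inner product), and your three steps simply run that derivation in the verification direction. The block-by-block matching and the final rearrangement $-X\diamond(Tp - C\nu) = -X\diamond p - X\diamond((T-1)p - C\nu)$ are exactly the paper's argument.
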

 
\begin{remark}
An applied force $F$ on the internal variable $q$ results, via the feedback law~\eqref{e:PC3}, in an equivalent system with a force $f$ acting on the fluid flow variable. 
\end{remark} 

\begin{remark}
The map $T$ replaces the need for introducing a new variable $\tilde{p}$. Compare with Remarks~\ref{rem1} and \ref{rem3}. As explained in Remark~\ref{rem:LP_C}, the $\tilde{p}$-construction does not work for fluid dynamical systems. 
\end{remark}

\subsection{Momentum $p=0$}
Stabilization in Section~\ref{sec:RBstab} was achieved by considering the case of zero momentum. For the general system~\eqref{e:Clp5}, $p=0$ yields
\begin{align}
    \dot{\nu} 
    &= -\ad(u)^*\nu  + f 
    \label{e:Clp7}
    ,\qquad
    u =
    [\mu_C^M]^{-1}\nu  
\end{align}
with a force term
\begin{equation}
    f =  - \Big(A_C[\mu_C^M]^{-1}\nu\Big) \diamond \Big(C\nu\Big) 
\end{equation}

\subsection{$C=\gamma\I_0 A_0[\mu_0^M]^{-1}$ and $p=0$}
Let $\gamma\in\R$ be a control parameter and define the control law~\eqref{e:PC3} by 
\begin{equation}
\label{e:C_IAmu}
    C = \gamma\I_0 A_0[\mu_0^M]^{-1}.
\end{equation}

\begin{remark}
Note that equation~\eqref{e:C_k1} can be written in this form with $k = -\gamma i_3 I_3^{-1}$.
\end{remark}

Using definition \eqref{e:diamond},  the force $f$ on $v\in\du$ is, with $\nu\in\du^*$ and $Y=A_0[\mu_0^M]^{-1}\nu$,
\begin{align}
    f(v) &= -\int_M\vv<C\nu,\nabla_v\Big(A_C[\mu_C^M]^{-1}\nu\Big)>\,dx 
    \notag\\
    &= 
    -\int_M\vv<\gamma \I_0 A_0[\mu_0^M]^{-1}\nu,
        \nabla_v\Big((A_0[\mu_0^M]^{-1}(1+[A_0^*]C)+\I_0^{-1}C)\nu\Big)> \,dx  
        \label{e:fis0}\\
    &= -\gamma\int_M\vv<\I_0Y, 
        \nabla_v\Big((1+\gamma A_0[\mu_0^M]^{-1}[A_0^*]\I_0+\gamma)Y\Big)>\,dx
        \notag \\
    &= -\gamma\int_M\vv<\I_0Y, 
        \nabla_v\Big((1+\gamma Q_0\I_0 )Y\Big)>\,dx 
        \notag \\
    &=
    -\gamma_M\int\textup{div}\Big(\vv<\I_0 Y,Y>v\Big)\,dx
    -\gamma^2\int_M\vv<\I_0Y, 
        \nabla_v \Big(Q_0\I_0 Y\Big)>\,dx 
        \notag 
\end{align}
Since $\I_0$ is constant it follows that $f(v)=0$, if $\nabla_v (Q_0\I_0 Y) = \lam\nabla_v Y$ for some $\lam\in\R$.

\subsection{$C=\gamma R^{-1}\I_0 A_0[\mu_0^M]^{-1}$ and $p=0$}\label{sub:CTM}
Now we shall modify the control so that $f=0$.
Since $\I_0$ is constant,
equation~\eqref{e:fis0} implies that $f = 0$ if 
\begin{equation}\label{e:CCC}
 \lam\I_0^{-1} C 
 = \gamma\Big(A_0[\mu_0^M]^{-1}[A_0^*]C + \I_0^{-1}C + A_0[\mu_0^M]^{-1}\Big) 
\end{equation}
for control parameters $\lam$ and $\gamma$. Clearly, $\lam$ and $\gamma$ are not independent, and it will be convenient to choose $\lam := \gamma+1$. Let
\begin{equation}
    \label{e:Cgamma}
    C := \gamma R^{-1}\I_0 A_0 [\mu_0^M]^{-1}
\end{equation}
where 
\begin{equation}
    \label{e:DefR}
    R := 1 - \gamma \I_0 A_0 [\mu_0^M]^{-1} [A_0^*]
\end{equation}
and $\gamma$ is chosen so that $R^{-1}$ exists. If $R^{-1}$ exists, so that $C$ is well-defined, we have $R^{-1} = 1 + C[A_0^*]$. It follows that $C$ satisfies \eqref{e:CCC}. 

We recall that we assume throughout that $(\mu_0^M,\I_0,A_0)$ is a Kaluza-Klein triple which means, in particular, that $\I_0$ is $\Ad(K)$-invariant. 

\begin{theorem}
\label{thm:M_CT}
Define $A_C$ by \eqref{e:TCcond1}. 
Assume that $\gamma$ is sufficiently small so that $R$, $1+[A_0^*]C$ and
\begin{align}
    \label{e:DefT}
    T := 1 + \gamma + C[A_C^*] = 1 + \gamma + (1+\gamma)C[A_0^*]
\end{align}
are invertible.
Define $[\mu_C^M]$ and $\I_C$ by \eqref{e:mu_C^M} and \eqref{e:TCcond2a}, respectively. 
Then the following are true:
\begin{enumerate}[\up (1)]
\item 
$T$ satisfies \eqref{e:TCcond1a}.
\item
$\I_C = (1+\gamma)^{-1}\I_0$ is invertible and $\Ad(K)$-invariant. 
Moreover,
$[\mu_C^M]^* = [\mu_C^M]$ and $\I_C^* = \I_C$. 
\item
The feedback control system \eqref{e:Clp5}, at momentum  $p=0$, is 
\begin{align}
  \label{e:lpa}
    \dot{\nu} 
    &= -\ad(u_C)^*\nu, 
    \qquad
    u_C =
    [\mu_C^M]^{-1}\nu  
\end{align}
which is a Lie-Poisson system with respect to the Hamiltonian 
\begin{equation}
    \label{e:lpc}
    h_M^C: \du^*\to\R,\quad
    \nu\mapsto 
    \by{1}{2}\vv<\nu,[\mu_C^M]^{-1}\nu> 
    =  \by{1}{2}\int\vv<\nu,[\mu_C^M]^{-1}\nu>\,\textup{vol}_M
\end{equation}
\item
If $\nu_e$ is an equilibrium of the Lie-Poisson system associated to the uncontrolled ($C=0$) Hamiltonian $h_M^0$ and $C\nu_e = 0$, then $\nu_e$ is also an equilibrium for \eqref{e:lpa}.
\end{enumerate}
\end{theorem}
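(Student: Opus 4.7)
The plan is to reduce the entire theorem to the single algebraic identity $A_C = (1+\gamma)A_0$. Once this is in hand, parts (1) and (2) are direct substitutions, part (3) follows by applying Theorem~\ref{thm:1} together with a short integration-by-parts to kill the residual force, and part (4) is essentially immediate.

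First I would establish $A_C = (1+\gamma)A_0$. Start from $A_C = A_0 + \I_0^{-1}C[\mu_C^M]$ with $[\mu_C^M] = (1+[A_0^*]C)^{-1}[\mu_0^M]$. The ``push-through'' identity $C(1+[A_0^*]C)^{-1} = (1+C[A_0^*])^{-1}C$, combined with the paper's observation $(1+C[A_0^*])^{-1} = R$, yields $C[\mu_C^M] = RC[\mu_0^M] = R\cdot\gamma R^{-1}\I_0 A_0 = \gamma\I_0 A_0$. Hence $A_C = (1+\gamma)A_0$ and $[A_C^*] = (1+\gamma)[A_0^*]$. Part~(1) follows by expanding
\[
[A_0^*]T = (1+\gamma)[A_0^*] + [A_0^*]C[A_C^*] = [A_C^*] + [A_0^*]C[A_C^*] = (1+[A_0^*]C)[A_C^*].
\]
Part~(2) is then transparent: the definition $T = 1+\gamma + C[A_C^*]$ gives $T - C[A_C^*] = 1+\gamma$, hence $L_2 = (1+\gamma)^{-1}$ and $\I_C = (1+\gamma)^{-1}\I_0$, which inherits invertibility, symmetry and $\Ad(K)$-invariance from $\I_0$. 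The symmetry $[\mu_C^M]^* = [\mu_C^M]$ reduces, after unwinding, to $L_1^* = [\mu_0^M]^{-1}L_1[\mu_0^M]$, which in turn is equivalent to the operator identity $\I_0 R = R^*\I_0$; this one verifies by expanding both sides using symmetry of $\I_0$ and of $[\mu_0^M]$.

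For part~(3) I would invoke Theorem~\ref{thm:1}: the invertibility hypotheses are assumed, the scalar nature of $L_2$ makes $L_2^* = \I_0^{-1}L_2\I_0$ and its $\Ad(K)$-invariance trivial, and the symmetry of $L_1$ is what was just established. The theorem identifies the controlled dynamics with a forced Lie-Poisson system carrying $f = -X\diamond((T-1)p - C\nu)$, which at $p=0$ simplifies to $f = X \diamond C\nu$ with $X = -A_C u_C$. Setting $Y = A_0[\mu_0^M]^{-1}\nu$ and using the consequence $R^{-1}\I_0 = \I_0(R^*)^{-1}$ of $\I_0 R = R^*\I_0$, one computes $A_C u_C = (1+\gamma)(R^*)^{-1}Y$ and $C\nu = \gamma\I_0(R^*)^{-1}Y$, so with $Z := (R^*)^{-1}Y \in \gau$ the residual force becomes $f = -\gamma(1+\gamma)\,Z\diamond\I_0 Z$. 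The vanishing $Z\diamond\I_0 Z = 0$ then follows from \eqref{e:diamond}: constancy and symmetry of $\I_0$ give $\langle\I_0 Z, \nabla_v Z\rangle = \tfrac12 v(\langle\I_0 Z, Z\rangle)$ pointwise, and for divergence-free $v$ tangent to $\partial M$ this integrates to zero by Stokes. Equation \eqref{e:lpa} is then what remains, with Hamiltonian $h_M^C(\nu) = \tfrac12\langle\nu, [\mu_C^M]^{-1}\nu\rangle$ well defined thanks to the self-adjointness of $[\mu_C^M]^{-1}$ from part~(2).

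Part~(4) is immediate: if $C\nu_e = 0$, then $[\mu_C^M]^{-1}\nu_e = [\mu_0^M]^{-1}(1+[A_0^*]C)\nu_e = [\mu_0^M]^{-1}\nu_e$, and so $\ad([\mu_C^M]^{-1}\nu_e)^*\nu_e = \ad([\mu_0^M]^{-1}\nu_e)^*\nu_e = 0$. The main obstacle I anticipate is book-keeping the adjoints across the $L^2$-duality isomorphisms $\du\leftrightarrow\du^*$ and $\gau\leftrightarrow\gau^*$ when verifying $\I_0 R = R^*\I_0$ and the resulting symmetry of $[\mu_C^M]$; once that step is done correctly, the rest of the theorem assembles without further difficulty.
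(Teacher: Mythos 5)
Your proposal is correct, and its overall architecture matches the paper's: reduce to Theorem~\ref{thm:1}, show the residual force integrates to a divergence and hence vanishes, and observe part~(4) from $[\mu_C^M]^{-1}\nu_e=[\mu_0^M]^{-1}\nu_e$. The genuine difference is your opening move: you isolate the identity $A_C=(1+\gamma)A_0$ as a standalone lemma, obtained from the push-through identity $C(1+[A_0^*]C)^{-1}=(1+C[A_0^*])^{-1}C=RC$ together with $C[\mu_0^M]=\gamma R^{-1}\I_0A_0$. The paper never states this identity explicitly (it is only implicit in the second equality of \eqref{e:DefT}, via $[A_C^*]=(1+\gamma)[A_0^*]$), and instead proves part~(1) by a longer direct manipulation starting from $[A_0^*]R=(1+[A_0^*]C)^{-1}[A_0^*]$ and $[A_C^*]=[A_0^*]+[\mu_C^M]C^*\I_0^{-1}$. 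Your route buys a two-line proof of \eqref{e:TCcond1a}, makes $T=(1+\gamma)R^{-1}$ and $T-C[A_C^*]=1+\gamma$ transparent, and lets you write the $p=0$ force as $-\gamma(1+\gamma)\,Z\diamond\I_0Z$ with $Z=(R^*)^{-1}A_0[\mu_0^M]^{-1}\nu$, after which the vanishing is the same divergence/Stokes argument the paper uses (the paper instead substitutes the defining relation \eqref{e:CCC} to collapse the integrand to $\textup{div}(\vv<C\nu,\I_0^{-1}C\nu>v)$ — the two computations are equivalent, both resting on $R^*=\I_0^{-1}R\I_0$ and on $v$ being divergence-free and tangent to $\del M$). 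The symmetry claim $[\mu_C^M]^*=[\mu_C^M]$ reduces in both treatments to $\I_0R=R^*\I_0$, which you correctly flag as the only place where care with the $L^2$-duality identifications is needed. No gaps.
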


\begin{proof}
To see that $T$ satisfies \eqref{e:TCcond1a}, note that $[A_0^*]R = (1+A_0^*C)^{-1}[A_0^*]$, which implies
\begin{align*}
    (1+[A_0^*]C)[A_C^*]
    &= [A_0^*] + [\mu_C^M]C^*\I_0^{-1} + [A_0^*]C ( [A_0^*] + [\mu_C^M]C^*\I_0^{-1})\\
    &= [A_0^*]
        + \gamma(1+[A_0^*]C)^{-1}[A_0^*]R^{-1}
        + [A_0^*] C [A_0^*]
        + \gamma[A_0^*]C(1+[A_0^*]C)^{-1}A_0^*R^{-1} \\
    &= [A_0^*]\Big(
        1+\gamma C(1+[A_0^*]C)^{-1}[A_0^*]\Big)R^{-1}
        + \gamma (1+[A_0^*]C)^{-1}[A_0^*] R^{-1} \\
    &= [A_0^*]\Big(R^{-1} + \gamma + \gamma C[A_0^*]\Big)
    = [A_0^*]\Big(1 + C[A_0^*] + \gamma + \gamma C[A_0^*]\Big)
    = [A_0^*]T.
\end{align*}

Since $R^* = \I_0^{-1}R\I_0$, it follows that 
\begin{align*}
    \Big([\mu_C^M]^{-1}\Big)^*
    &= 
    \Big([\mu_0^M]^{-1}(1+[A_0^*]C)\Big)^*
    =
    (1+\gamma [\mu_0^M]^{-1}[A_0^*]\I_0(R^{-1})^*A_0)[\mu_0^M]^{-1}\\
    &=
    [\mu_0^M]^{-1}(1 + \gamma [A_0^*]R^{-1}\I_0 A_0 [\mu_0^M]^{-1})
    = [\mu_C^M]^{-1}.
\end{align*}

Since $C$ satisfies \eqref{e:CCC} by construction, it follows that  
\begin{align*}
    \gamma f(v) 
    &= -\gamma\int_M\vv<C\nu,\nabla_v\Big(A_C[\mu_C^M]^{-1}\nu\Big)>\,dx \\
    &= 
    -\gamma\int_M\vv< C \nu,
        \nabla_v\Big((A_0[\mu_0^M]^{-1}(1+[A_0^*]C)+\I_0^{-1}C)\nu\Big)> \,dx  
   \\
    &= -\int_M\vv<C\nu , \nabla_v\Big( (\gamma+1)\I_0^{-1}C\nu\Big)>\,dx \\
    &= -(\gamma+1)\int_M \textup{div}\Big(\vv<C\nu ,\I_0^{-1}C\nu>v\Big)\,dx
    = 0 
\end{align*}
Concerning the last point, note that $C\nu_e=0$ implies $[\mu_0^M]^{-1}\nu_e = [\mu_C^M]^{-1}\nu_e$.
\end{proof}


\begin{remark}
The map $T$ does not appear in the system \eqref{e:lpa}  and \eqref{e:lpc}, because we assume $p=0$ which, according to \eqref{e:PC3}, means the control is $q=-C\nu$. However, $T$ was used  in the construction of the forced Lie-Poisson system of Theorem~\ref{thm:1}, without which we could not have obtained the Lie-Poisson system \eqref{e:lpa}. Further, the force~\eqref{e:forceF2}, which contains the physical interpretation of the control $C$, does depend on $T$. 
\end{remark}

\begin{remark}\label{rem:CfromA}
Once $A_0$ is specified, the only free parameter that remains is $\gamma$. The control $C$ is completely determined by the requirement that the force $f$ should vanish and $T$ is, in turn, fixed by \eqref{e:TCcond1a}. Thinking of the rigid body example, this makes physical sense: Once the rotor $A_0$ is attached to the third axis, the only quantity left to control is the rotor's speed. Now in order to obtain a conserved quantity, the rotor's speed has to be constant. This constant is the parameter $\gamma$ and Proposition~\ref{prop:RBstab} says how to choose $\gamma = -i_3^{-1} I_3 k$ so that rotation about the middle axis is stable. While the force $f$ is due to the diamond term and therefore absent in the rigid body example, the fact that the only free choice concerns the parameter $\gamma$ is a property that seems fundamental to the controlled Hamiltonian method.
\end{remark}

\section{Stabilization}\label{sec:stab}
Let the notation be as in Section~\ref{sec:FB_IS}. Assume that $\nu_e$ is an equilibrium of the uncontrolled Lie-Poisson system 
\begin{equation}
\label{e:E0}
    \dot{\nu} 
    = -\ad([\mu_0^M]^{-1}\nu)^*\nu
\end{equation}
in $\du^*$. 
This equation is Lie-Poisson with respect to the standard Poisson bracket $\{.,.\}$ inherited from the canonical symplectic form on $T^*\D$. 
It is the Euler equation for the Hamiltonian $h_0^M: \du^*\to\R$, $\nu\mapsto\by{1}{2}\vv<\nu,[\mu_0^M]^{-1}\nu>$. 

Since the controlled system \eqref{e:lpa}  and \eqref{e:lpc} is Lie-Poisson with respect to the same bracket $\{.,.\}$, the set of Casimir functions remains unchanged. Moreover, the symplectic leaves depend only $\{.,.\}$ but not on the dynamical equation. Thus, if the symplectic leaf  passing through $\nu_e$ consists only of this one point, then $\nu_e$ is also an equilibrium of \eqref{e:lpa}. 

Let us assume that $\nu_e$ is an \emph{unstable}  equilibrium of \eqref{e:E0}. The goal is to find a control $C$ (acting on the internal momentum variable $q$) such that $\nu_e$ is a stable equilibrium of \eqref{e:lpa}.  
To this end we consider some of the points of the stabilization algorithm of \cite{HMRW85}:
\begin{itemize}
\item[(A)]
Hamiltonian form: 
This is satisfied by construction since \eqref{e:lpa} is  Lie-Poisson.
\item[(C)]
First variation:
If $g_0$ is a constant of motion such that $\nu_e$ is a critical point of $h^M_0 + g_0$,
then $\nu_e$ may (as in the rigid body example in Section~\ref{sec:RBagain}) or may not (as in the shear flow example of Section~\ref{sec:shear}) be a critical point of $h_C^M + g_0$.
In the latter case the potential $A_0$ and the control $C$ need to be constructed so that there exists a constant of motion $g_C$ such that $\nu_e$ is a critical point of $h_C^M+g_C$.
\item[(D1)]
Formal stability: With $g=g_0$ or $g=g_C$, show that the second variation $D^2(h_C^M+g)(\nu_e)$ is positive or negative definite. Alternatively, one may consider the inclusion $\iota: \mathcal{O}\to\du^*$ of the coadjoint orbit $\mathcal{O}$ through $\nu_e$, and show that $D^2(\iota^*h_C^M)(\nu_e)$ is definite. 
\end{itemize}

\begin{remark}
Starting from a general form of the control $C$, the idea is that the above steps should yield conditions on $C$ such that $\nu_e$ is a stable equilibrium of \eqref{e:lpa}. This presupposes the existence of internal symmetries $\G$ on which $C$ can act (via the force \eqref{e:forceF2}). It should be noted that this process only allows to conclude stability with respect to variations in the $\nu$-variables.  
\end{remark}


The following is a version of \cite[Equ.~(44)]{A66}:\footnote{In \eqref{e:SV} we do not have the factor $\by{1}{2}$ of  \cite[Equ.~(44)]{A66} because we use the convention of \cite{HMRW85} for the definition of the second variation.} 

\begin{proposition}[Second variation]\label{prop:SV}
Let $\nu_e$ be an equilibrium of \eqref{e:lpa}, let $\mathcal{O}$ be the coadjoint orbit through $\nu_e$ and let $\iota: \mathcal{O}\to\du^*$ be the inclusion. Then the second variation of $\iota^*h_C^M$ at $\nu_e$ is given by 
\begin{equation}
    \label{e:SV}
    D^2(\iota^*h_C^M)(\nu_e)(\delta\nu,\delta\nu)
    = 
    \vv<\delta\nu, [\mu_C^M]^{-1}\delta\nu + [v,[\mu_C^M]^{-1}\nu_e]>
\end{equation}
where $\delta\nu = \ad(v)^*\nu_e\in T_{\nu_e}\mathcal{O}$ with $v\in\du$.
\end{proposition}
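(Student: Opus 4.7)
The plan is to parameterize the coadjoint orbit $\mathcal{O}$ through $\nu_e$ by a smooth curve tangent to $\delta\nu$ at $t=0$ and read off the second variation from the Taylor expansion of $h_C^M$. Given $v\in\du$ with $\ad(v)^*\nu_e = \delta\nu$, I would take $\nu(t)\in\mathcal{O}$ with $\nu(0)=\nu_e$, $\dot\nu(0)=\delta\nu$ and $\ddot\nu(0) = \ad(v)^*\delta\nu$; formally this corresponds to the coadjoint action $\nu(t) = \Ad(\exp(tv))^*\nu_e$, but for the Hessian only the two-jet matters.

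Before computing, I would verify that $\nu_e$ is a critical point of $\iota^*h_C^M$, which makes the Hessian chart-independent. The differential of $h_C^M$ at $\nu_e$ is represented by $[\mu_C^M]^{-1}\nu_e\in\du$, so one checks
\[
 \vv<\ad(v)^*\nu_e,\,[\mu_C^M]^{-1}\nu_e>
 = \vv<\nu_e,\,[v,[\mu_C^M]^{-1}\nu_e]>
 = -\vv<\ad([\mu_C^M]^{-1}\nu_e)^*\nu_e,\,v>
 = 0
\]
for every $v$, where the last identity is precisely the equilibrium condition for \eqref{e:lpa}.

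With this in place, expanding $2h_C^M(\nu(t)) = \vv<\nu(t),[\mu_C^M]^{-1}\nu(t)>$ to order $t^2$ produces a linear term that vanishes by the critical-point identity, and a quadratic term
\[
 \vv<\delta\nu,[\mu_C^M]^{-1}\delta\nu>
 + \vv<\ad(v)^*\delta\nu,\,[\mu_C^M]^{-1}\nu_e>.
\]
Applying the pairing identity $\vv<\ad(v)^*\mu,w> = \vv<\mu,[v,w]>$ to the second summand converts it to $\vv<\delta\nu,[v,[\mu_C^M]^{-1}\nu_e]>$, which collects into the stated formula \eqref{e:SV}. As a small consistency check one can verify that the right-hand side does not depend on the choice of $v$ realising $\delta\nu$: any shift $v\to v+v_0$ with $\ad(v_0)^*\nu_e=0$ changes the extra term by $\vv<\nu_e,[v,[v_0,[\mu_C^M]^{-1}\nu_e]]>$, which vanishes after one application of the Jacobi identity using $\ad(v_0)^*\nu_e=0$ and $\ad([\mu_C^M]^{-1}\nu_e)^*\nu_e = 0$.

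The main subtlety I foresee is that $\D$ is an infinite-dimensional Lie group whose exponential map need not parameterize a neighbourhood of the identity, so one cannot naively use $\exp$ to generate the curve in $\mathcal{O}$. I would circumvent this by invoking any smooth curve in $\mathcal{O}$ realising the prescribed two-jet at $t=0$: since $\nu_e$ is a critical point of $\iota^*h_C^M$, the Hessian is intrinsic and depends only on this two-jet. Beyond that, the argument is purely algebraic, modulo careful bookkeeping of the sign convention $[u,v] = -\nabla_u v + \nabla_v u$ fixed in Section~\ref{sec:SD_struc}.
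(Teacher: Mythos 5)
Your proposal is correct and follows essentially the same route as the paper: differentiate $h_C^M$ twice along the curve $\Ad(\exp(tv))^*\nu_e$ (equivalently, along any curve in $\mathcal{O}$ with that two-jet), use the pairing identity $\vv<\ad(v)^*\mu,w>=\vv<\mu,[v,w]>$ to produce the $[v,[\mu_C^M]^{-1}\nu_e]$ term, and verify independence of the choice of $v$ via the Jacobi identity together with $\ad([\mu_C^M]^{-1}\nu_e)^*\nu_e=0$. Your explicit check that $\nu_e$ is a critical point of $\iota^*h_C^M$ and your remark about not relying on surjectivity of $\exp$ in $\D$ are welcome refinements of details the paper leaves implicit, but they do not change the argument.
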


\begin{proof}
Since $\exp(tv)$ and $\exp(sv)$ commute,
\begin{align*}
    &D^2(\iota^* h_C^M)(\nu_e)(\delta\nu,\delta\nu)
    = 
    \frac{\del^2}{\del t^2}|_0
    (\iota^* h_C^M)(\Ad(\exp(tv))^*\nu_e)\\
    &=
    \frac{1}{2}\frac{\del}{\del s}|_0\frac{\del}{\del t}|_0
    \vv<\Ad(\exp(tv))^*\Ad(\exp(sv))^*\nu_e,
    [\mu_C^M]^{-1}\Ad(\exp(tv))^*\Ad(\exp(sv))^*\nu_e>\\
    &=
    -\frac{\del}{\del s}|_0
    \vv<\ad([\mu_C^M]^{-1}\Ad(\exp(sv)^*\nu_e)^*\Ad(\exp(sv)^*\nu_e,v>\\
    &=
    \vv<\ad(v)^*\nu_e
        ,[\mu_C^M]^{-1}\ad(v)^*\nu_e
         + [v,[\mu_C^M]^{-1}\nu_e]>.
\end{align*}
To show that \eqref{e:SV} depends only on $\delta\nu$ and not on $v$, one proceeds exactly as in \cite{A66}: Suppose $\delta\nu=\ad(v)^*\nu_e=\ad(w)^*\nu_e$, and use the Jacobi identity and $\ad([\mu_C^M]^{-1}\nu_e)^*\nu_e = 0$ to show that
\begin{align*}
    \vv<\delta\nu, [v-w,[\mu_C^M]^{-1}\nu_e]>
    &=
    \vv<\nu_e, 
        [[v,v-w],[\mu_C^M]^{-1}\nu_e]
        + [v-w,[v,[\mu_C^M]^{-1}\nu_e]>\\
    &= \vv<\ad(v-w)^*\nu_e, [v,[\mu_C^M]^{-1}\nu_e]> 
    = 0.
\end{align*}
\end{proof}

\section{The rigid body with a rotor in  the context of Theorem~\ref{thm:M_CT}}\label{sec:RBagain}
Let us apply Theorem~\ref{thm:M_CT} and the algorithm of Section~\ref{sec:stab} to the rigid body example of Section~\ref{sec:rigidB}. 
This is along the lines of \cite[Equ.~(3.1C)]{HMRW85}. 
In the notation of Section~\ref{sec:FB_IS} we have now $\D = \SO(3)$ and $\G=S^1$. The connection is $A_0u = \vv<e_3,u> = u_3$, the metric is given in \eqref{e:muh} as 
\begin{equation}
    \mu_0^M = \mu_0^S = 
    \diag{\lam_1,\lam_2,I_3}
\end{equation}
where $\lam_j = I_j + i_j$, 
and the inertia tensor is $\I_0 X = i_3 X$. As in Section~\ref{sec:rigidB}, we assume the ordering $I_1>I_2>I_3$. 
The corresponding Hamiltonian function on $\R^3$, viewed as the dual to the Lie algebra of $\SO(3)$, is $h_0: \R^3\to\R$, $\Pi\mapsto\by{1}{2}\vv<\Pi,(\mu_0^S)^{-1}\Pi>$. 
Since the Kaluza-Klein data $\mu_0^M$, $\I_0$ and $A_0$ are constant in the configuration space variable, we can apply formula \eqref{e:Cgamma} with $R=1$ and define the control as
\begin{equation}
    C: \R^3\to\R,
    \quad
    \Pi\mapsto 
    \gamma\I_0 A_0 (\mu_0^S)^{-1}\Pi
    = i_3 I_3^{-1}\gamma \Pi_3
\end{equation}
which coincides with \eqref{e:C_k1} if we put 
\begin{equation}
    \label{e:k-gamma}
    k = -i_3 I_3^{-1}\gamma.
\end{equation}
The Casimirs of the Poisson algebra $C^{\infty}(\R^3)$ are of the form
\[
 g_{\var}(\Pi) = \var(\by{|\Pi|^2}{2})
\]
for $\var\in C^{\infty}(\R)$. 
Consider the equilibrium $\Pi_e = (0,1,0)^{\top}$. This is unstable for the Lie-Poisson system associated to $h_0$. To relate $\Pi_e$ to a Casimir function, note that 
\[
 D(h_0 + g_{\var})(\Pi_e)\delta\nu
 = \vv<(\mu_0^S)^{-1}\Pi_e + \var'(\by{|\Pi_e|^2}{2})\Pi_e,\delta\nu>
\]
which implies the condition  $\var'(\by{1}{2}) = -\lam_2^{-1}$. Now, observe that 
\begin{equation}
    C \Pi_e = 0. 
\end{equation}
With equation~\eqref{e:mu_C^M} this yields 
\[
 Dh_0(\Pi_e)
 = (\mu_0^S)^{-1}\Pi_e
 = (\mu_C^S)^{-1}\Pi_e
 = Dh_C(\Pi_e)
\]
where $h_C$ is, according to \eqref{e:lpc}, given by $h_C: \R^3\to\R$, $\Pi\mapsto\by{1}{2}\vv<\Pi,(\mu_C^S)^{-1}\Pi>$ and
$\mu_C^S = (1+A_0^*C)^{-1}\mu_0^S = (1-k e_3e_3^{\top})^{-1}\mu_0^S = \eqref{e:muS2}$ as defined by \eqref{e:mu_C^M}.

\begin{remark}
In step (C) in the stability algorithm this corresponds to the case where we do not have to change the constant of motion. Thus $g_{\var}$ can be used for the stability analysis. 
\end{remark}

Now one proceeds with the stability analysis by calculating the second variation $D^2(h_C + g_0)(\Pi_e)$. One finds the stability condition
\begin{equation}
    \label{e:stabRBalg}
    -i_3 I_3^{-1} \gamma > 1 - I_3 \lam_2^{-1}
\end{equation}
which
makes $D^2(h_C + g_0)(\Pi_e)$ negative definite and, due to \eqref{e:k-gamma}, coincides with Proposition~\ref{prop:RBstab}.

\begin{remark}\label{rem:T_RBagain}
Contrary to Section~\ref{sec:rigidB}, specifically equation~\eqref{e:HamLoop1}, we did not need to introduce a new momentum value $\tilde{p}$. This is because of the map $T$. However, to correctly apply Theorem~\ref{thm:M_CT}, it should be checked that $T$ is invertible. A calculation indeed shows that $T = (i_3-k\lam_3)/i_3$, which is consistent with \eqref{e:pk_tilde}.  
\end{remark}

\section{Formal stability of two-dimensional fluids in an external field}\label{sec:stab_2D}
Consider now a compact domain $M\subset\R^2$ with smooth boundary $\del M$. Assume that $M$ is filled with an ideal  incompressible fluid which consists of charged particles. The fluid is subject to an external field and we wish to control the charge of the fluid in a domain dependent manner. The external field is modelled as the curvature of a connection on a trivial principal bundle $P=M\times K$ over $M$.

\subsection{Formal stability}\label{sec:formal_stab_2D}
Let $P=M\times K$, $\mu_0^M$ is the Euclidean inner product, $\I_0=1$, $A_0\in\Om^1(M,\ko)$ and $\mu_0^P$ is given by \eqref{e:mu0}.
Let the control be given by \eqref{e:Cgamma} and consider the feedback system \eqref{e:lpa} and \eqref{e:lpc}. 

The stream function of a vector field $v\in\du$ is a function $\psi$ such that $\psi$ is constant on the boundary and 
\[
 \nabla^s\psi = (-\del_y\psi,\del_x\psi)^{\top} = v.
\]
The vorticity of an element $\nu = [\Pi]\in\du^*$ is the function $\om^{\nu} = *d\Pi$ where $*$ is the Hodge star operator.

Note that vorticity is conserved by \eqref{e:lpa}. Indeed, with $u_C = [\mu_C^M]^{-1}(\nu)$
\begin{equation}
    \label{e:vor_cons}
    \dot{\om}^{\nu}
    = - * d(\ad(u_C)^*\Pi)
    = - * d(L_{u_C}\Pi)
    = - * L_{u_C}(d\Pi)
    = -L_{u_C}\om^{\nu}
\end{equation}
where $L_{u_C} = d i(u_C) + i(u_C) d$ is the Lie derivative along $u_C$.

\begin{proposition}[Second variation]\label{prop:SV2D}
Let $\nu_e$ be an equilibrium of the uncontrolled Euler equation~\eqref{e:E0} and of the controlled Euler equation~\eqref{e:lpa}, let $\mathcal{O}$ be the coadjoint orbit through $\nu_e$ and let $\iota: \mathcal{O}\to\du^*$ be the inclusion. 
Let $\psi_0$ and $\psi_C$ be the stream functions of $[\mu_0^M]^{-1}\nu_e$ and $[\mu_C^M]^{-1}\nu_e$, respectively. Assume there is a function $\var$ such that $\nabla\psi_C = \var\nabla\psi_0$. 
Then the second variation of $\iota^*h_C^M$ at $\nu_e$ is given by
\begin{equation}
    \label{e:SV2D}
    D^2(\iota^*h_C^M)(\nu_e)(\delta\nu,\delta\nu)
    = 
    \vv<\delta\nu, [\mu_C^M]^{-1}\delta\nu> 
    +
    \int_M \var\frac{\nabla\psi_0}{\nabla\Delta\psi_0}(\delta \om)^2\,dxdy
\end{equation}
where $\delta\nu = \ad(v)^*\nu_e\in T_{\nu_e}\mathcal{O}$,
$\delta\om = L_v\om_e$  
with $v\in\du$ and $\om_e = \om^{\nu_e}$.
\end{proposition}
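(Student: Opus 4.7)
The plan is to combine Proposition~\ref{prop:SV} with the classical Arnold calculation in stream–function/vorticity coordinates. By Proposition~\ref{prop:SV},
\begin{equation*}
D^2(\iota^*h_C^M)(\nu_e)(\delta\nu,\delta\nu) = \vv<\delta\nu, [\mu_C^M]^{-1}\delta\nu> + \vv<\delta\nu, [v, u_C]>,
\end{equation*}
where $u_C := [\mu_C^M]^{-1}\nu_e$. The first summand already matches the first summand of \eqref{e:SV2D}, so everything reduces to rewriting the cross term as $\int_M \var(\nabla\psi_0/\nabla\Delta\psi_0)(\delta\om)^2\,dxdy$.

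To compute the cross term I would parameterize the coadjoint orbit by $t\mapsto\nu(t):=\Ad(\exp(tv))^*\nu_e$, so $\dot{\nu}(0)=\delta\nu$ and the associated vorticity satisfies $\om(t) = \om_e + t\,\delta\om + \tfrac{t^2}{2}\,\delta^2\om + O(t^3)$ with $\delta\om = L_v\om_e = v\cdot\nabla\om_e$ and $\delta^2\om = L_v^2\om_e = v\cdot\nabla(v\cdot\nabla\om_e)$. Writing the Hamiltonian in stream-function form,
\begin{equation*}
h_C^M(\nu) = -\tfrac{1}{2}\int_M \psi_C(\nu)\,\om^\nu\,dxdy,
\end{equation*}
which is valid modulo boundary terms that vanish because $\psi_C$ is constant on $\del M$, and using that the map $\nu\mapsto\psi_C(\nu)$ is linear and symmetric (as the inverse of the symmetric operator $[\mu_C^M]$), a direct Taylor expansion gives
\begin{equation*}
D^2(\iota^*h_C^M)(\nu_e)(\delta\nu,\delta\nu) = \vv<\delta\nu,[\mu_C^M]^{-1}\delta\nu> - \int_M \psi_C\cdot\delta^2\om\,dxdy.
\end{equation*}
Comparing with the abstract expression above identifies the cross term as $\vv<\delta\nu,[v,u_C]> = -\int_M \psi_C\,\delta^2\om\,dxdy$.

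The final step invokes the two equilibrium conditions. Since $\nu_e$ is an equilibrium of \eqref{e:lpa}, one has $u_C\cdot\nabla\om_e = 0$, so $\nabla\psi_C \parallel \nabla\om_e$ and locally $\psi_C = G(\om_e)$ for a scalar function $G$. Integration by parts (using $\textup{div}\,v = 0$ and the boundary conditions) yields
\begin{equation*}
-\int_M G(\om_e)\,v\cdot\nabla(v\cdot\nabla\om_e)\,dxdy = \int_M G'(\om_e)(v\cdot\nabla\om_e)^2\,dxdy = \int_M \frac{d\psi_C}{d\om_e}(\delta\om)^2\,dxdy.
\end{equation*}
Because $\nu_e$ is also an equilibrium of the uncontrolled Euler equation \eqref{e:E0}, one similarly has $\nabla\psi_0 \parallel \nabla\om_e$; together with $\om_e = \Delta\psi_0$ this means $d\psi_0/d\om_e = \nabla\psi_0/\nabla\Delta\psi_0$. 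The hypothesis $\nabla\psi_C = \var\,\nabla\psi_0$ then forces $d\psi_C/d\om_e = \var\cdot\nabla\psi_0/\nabla\Delta\psi_0$, and substituting gives \eqref{e:SV2D}. The main technical point is the integration-by-parts step, which relies on the functional dependence $\psi_C = G(\om_e)$ and careful treatment of the boundary contributions — this is exactly the classical Arnold computation \cite{A66}, here adapted to the modified inner product $[\mu_C^M]$.
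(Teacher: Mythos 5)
Your proposal is correct and lands on the right formula, but it evaluates the cross term by a genuinely different route than the paper. Both arguments start from Proposition~\ref{prop:SV}, so everything hinges on rewriting $\vv<\delta\nu,[v,u_C^e]>$ with $u_C^e=[\mu_C^M]^{-1}\nu_e$. The paper does this pointwise: it writes the bracket of divergence-free fields as $[v,u]=-\nabla^s(v\times_3 u)$, pairs with $\delta\nu$ to get $\int_M \delta\om\,(v\times_3 u_C^e)\,dxdy$, and then uses the two algebraic identities $v\times_3 u_C^e=\vv<v,\var\nabla\psi_0>$ and $\delta\om=\vv<v,\nabla\Delta\psi_0>$ to conclude $v\times_3 u_C^e=\var\tfrac{\nabla\psi_0}{\nabla\Delta\psi_0}\,\delta\om$ directly --- no second-order vorticity variation, no integration by parts, and no functional relation $\psi_C=G(\om_e)$ is needed. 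Your route instead Taylor-expands the Hamiltonian in stream-function form, identifies the cross term as $-\int_M\psi_C\,\delta^2\om\,dxdy$ with $\delta^2\om=L_v^2\om_e$, and then integrates by parts using $\psi_C=G(\om_e)$; this is the classical Arnold presentation and it does give the same answer (indeed $\vv<\delta\nu,[v,u_C^e]>=\vv<L_v^2\nu_e,u_C^e>$ by skew-adjointness of $L_v$ for divergence-free $v$, which is exactly your $-\int\psi_C\,\delta^2\om$). What your version buys is a transparent interpretation of the coefficient as $d\psi_C/d\om_e$; what it costs is two extra technicalities the paper's pointwise computation sidesteps: (i) the functional relation $\psi_C=G(\om_e)$ is only available where $\nabla\om_e\neq 0$ (the same caveat that is implicit in the ratio $\nabla\psi_0/\nabla\Delta\psi_0$, so this is shared, not new), and (ii) the boundary terms in $h_C^M(\nu)=-\tfrac12\int\psi_C(\nu)\,\om^\nu$ do \emph{not} simply vanish because $\psi_C$ is constant on $\del M$ --- for a channel with two boundary components they equal (boundary constants)$\times$(circulations), and the correct justification is that the circulations are invariant along the coadjoint orbit, so these terms are constant in $t$ and drop out of the second derivative. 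With that repair your argument is sound.
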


\begin{proof}
The proof follows \cite[II.~Thm.~4.1]{AK98}.
For vector fields $v,u\in\du$ we may write the bracket as 
$[v,u] = -\nabla_v u + \nabla_uv = -\nabla^s(v\times_3 u)$ where $v\times_3 u$ is the third component of the cross product of the vectors $v,u$ when extended to $\R^3$. With $\delta\nu = L_v\nu_e$ 
and $u_C^e = [\mu_C^M]^{-1}(\nu_e)$ we have 
\begin{align*}
    \vv<\delta\nu,[v,u^e_C]>
    &= -\vv<\delta\nu,\nabla^s(v\times_3 u_C^e)>
    = \int_M (*d L_v\nu_e) \cdot (v\times_3 u_C^e) \,dxdy\\
    &= \int_M (L_v\om_e) \cdot (v\times_3 u_C^e) \,dxdy
    = \int_M \delta\om \cdot (v\times_3 u_C^e) \,dxdy
\end{align*}
Observe that 
$
 \om_e = *d\Pi_e = *d\mu_0^M\nabla^s\psi_0 = \Delta\psi_0
$
and that $\nabla\psi_0$ is parallel to $\nabla\Delta\psi_0$, since $\nu_e$ is an equilibrium of \eqref{e:E0}. 
Therefore, the equations 
$
 v\times_3 u_C^e 
 = \vv<v,\nabla\psi_C> 
 = \vv<v,\var\nabla\psi_0> 
$
and 
$
 \delta\om 
 = L_v\om_e
 = \vv<v,\nabla\Delta\psi_0>
$
imply that 
\[
 v\times_3 u_C^e = \var\frac{\nabla\psi_0}{\nabla\Delta\psi_0}\delta\om
\]
whence the result follows from Proposition~\ref{prop:SV}.
\end{proof}

We say that $\nu_e$ is formally stable for \eqref{e:lpa} if there is $\eps>0$, such that either
\begin{equation}
    \label{e:StabCrit_2D}
      D^2(\iota^*h_C^M)(\nu_e)(\delta\nu,\delta\nu)
      > \eps \int_M(\delta\om)^2\,dxdy
\end{equation}
or
\begin{equation}
    \label{e:StabCrit_2D_negdef}
    -D^2(\iota^*h_C^M)(\nu_e)(\delta\nu,\delta\nu)
    > \eps \int_M(\delta\om)^2\,dxdy.
\end{equation}

\section{Feedback stabilization of shear flow in a magnetic field}\label{sec:shear}

\subsection{Shear flow}
Consider the example of \cite[Ch.~II,~Ex.~4.6]{AK98}. 
Let $X,Y>0$, $M = \{(x,y): 0\le x\le X\pi, 0\le y\le Y\pi\}$, 
$\mu_0^M=\vv<.,.>$ the Euclidean metric 
and consider $\Pi_e = \Gamma(y)\,dx$. Then $\nu_e = [\Pi_e]\in\du^*$ is an equilibrium solution of the Euler equation~\eqref{e:E0} in $M$.  
In the following, we assume periodic boundary conditions in $x$, with periodicity $X\pi$. 

Let $\Gamma(y) = \sin(y+\by{\pi}{2})$ and assume that $Y>\by{1}{2}$. Then the flow has an inflection point at $y = \by{\pi}{2}$.  
The stream function is $\psi_0(x,y) = \psi_0(y) = \cos(y+\by{\pi}{2})$ and 
\begin{equation}
    \label{e:shear_factor}
    \frac{\nabla\psi_0}{\nabla\Delta\psi_0} = -1  
\end{equation}
whence criterion~\eqref{e:StabCrit_2D} cannot be satisfied. However,
if $X$ is sufficiently small then \eqref{e:StabCrit_2D_negdef} yields stability of $\nu_e$. The vorticity function of $\nu_e$ is $\om_e(y) = \Delta\psi_0(y) = -\cos(y+\by{\pi}{2})$.

\subsection{Magnetic field}
Let $K=S^1$, $\I_0=1$, $P = M\times K \to M$ a trivial principle bundle and consider the connection form $A_0: TM\to\ko = \R$ given by 
\[
 A_0(x,y) = a_0(y)\,dx.
\]
The associated magnetic field is $\mathcal{M} = -a_0'(y)\,dx\wedge dy$, which is a field that is orthogonal to the plane. We shall identify $\ko = \R = \ko^*$ and accordingly $\gau = \mathcal{F}(M,\R) = \gau^*$ from now on. 
The potential will be specified subject to certain conditions which are found below. In line with Remark~\ref{rem:CfromA}, finding a stabilizing control $C$ amounts to finding a suitable potential $A_0$. 

\subsection{The free system}\label{sec:free_magn_shear}
The free system corresponding to shear flow in a magnetic field $\mathcal{M}=dA_0$ is, as in \eqref{e:lp1}, \eqref{e:lp2}, \eqref{e:lp3}, 
  \begin{align}
    \dot{\nu} 
    &= -\ad(u)^*\nu - X\diamond q 
    \label{e:lp1_shear}
    \\
    \dot{q}
    &= -\rho^u(q) 
    \label{e:lp2_shear}\\
     \left(\begin{matrix} 
    u \\ 
    X
    \end{matrix}\right)
    &=   [\mu_0^P]^{-1}
    \left(\begin{matrix} 
    \nu \\ 
    q 
    \end{matrix}\right)
      \label{e:lp3_shear}
\end{align}  
where $\nu = [\Pi]\in\du^*$ and the metric $\mu_0^P$ is given by Lemma~\ref{pr:metric1} as
\[
 \mu_0^P
 = \mu^{KK}(\mu_0^M,\I_0,A_0)
 =  \left(\begin{matrix} 
    \mu_0^M + A_0^* A_0 &  A_0^* \\ 
    A_0 & 1
    \end{matrix}\right). 
\]
The variable $q$ corresponds to the charge of the particles and \eqref{e:lp2_shear} says that charge is conserved along flow lines.

\subsection{Control}
We wish to apply a force \eqref{e:TCforce} to the charge $q$, thereby obtaining a feedback control $q=-C\nu$ which yields closed loop equations that are of Lie-Poisson type \eqref{e:lpa}. 
Thus we use \eqref{e:Cgamma} to define the control as
\begin{equation}
    \label{e:C_shear}
    C 
    = \gamma R^{-1} A_0 [\mu_0^M]^{-1}
    = \gamma(1-\gamma A_0 [\mu_0^M]^{-1}[A_0^*])^{-1}A_0 [\mu_0^M]^{-1}.
\end{equation}

\begin{lemma}
Let $\tilde{\mu}_C^M := \Phi_{\gamma}\,dx\otimes dx + dy\otimes dy$ where $\Phi_{\gamma} := 1 - \gamma a_0^2$ and assume $\gamma a_0^2<1$. Then
\begin{equation}
    \label{e:mu_C^M_Shear}
    [\mu_C^M] 
    = [\tilde{\mu}_C^M]
\end{equation}
where $[\mu_C^M]$ is defined by \eqref{e:mu_C^M}.
\end{lemma}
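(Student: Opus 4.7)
The plan is to verify the equivalent identity $(1+[A_0^*]C)[\tilde{\mu}_C^M]u = [\mu_0^M]u$ for every $u\in\du$; by the definition \eqref{e:mu_C^M} and the standing invertibility of $1+[A_0^*]C$, this is the same as the asserted $[\mu_C^M] = [\tilde{\mu}_C^M]$. Writing $u = u^1\partial_x + u^2\partial_y$, every equality below holds exactly in $\du^*=\Om^1(M)/d\F(M)$, so no Helmholtz-Leray representative needs to be chased by hand.

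I would first compare $[\mu_0^M]u$ and $[\tilde{\mu}_C^M]u$ directly. Since $\tilde{\mu}_C^M$ differs from the Euclidean $\mu_0^M$ only in the $dx\otimes dx$-component by the scalar factor $\Phi_\gamma = 1-\gamma a_0^2$, and since $A_0 u = a_0 u^1\in\gau$ while $[A_0^*]q = [q\, a_0\, dx]\in\du^*$ for $q\in\gau^*$, one simply reads off
\begin{equation*}
    [\mu_0^M]u - [\tilde{\mu}_C^M]u = [\gamma a_0^2 u^1\,dx] = \gamma\,[A_0^*]A_0 u.
\end{equation*}
The key step, which sidesteps any explicit use of the Helmholtz-Leray projection of the non-solenoidal field $\Phi_\gamma u^1\partial_x + u^2\partial_y$, is then to apply $[\mu_0^M]^{-1}$ followed by $A_0$ to this identity. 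Invoking the definition \eqref{e:DefR} of $R$ with $\I_0=1$, this yields
\begin{equation*}
 A_0[\mu_0^M]^{-1}[\tilde{\mu}_C^M]u
 = A_0 u - \gamma A_0[\mu_0^M]^{-1}[A_0^*]A_0 u
 = R\, A_0 u.
\end{equation*}

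Closing the loop is then immediate: by \eqref{e:C_shear} and the invertibility of $R$,
\begin{equation*}
 C[\tilde{\mu}_C^M]u = \gamma R^{-1}\bigl(R A_0 u\bigr) = \gamma A_0 u,
\end{equation*}
so that $[A_0^*]C[\tilde{\mu}_C^M]u = \gamma[A_0^*]A_0 u = [\mu_0^M]u - [\tilde{\mu}_C^M]u$, and this rearranges to $(1+[A_0^*]C)[\tilde{\mu}_C^M]u = [\mu_0^M]u$, as desired. No serious obstacle arises; the only points that need attention are the existence of the two inverses $R^{-1}$ and $(1+[A_0^*]C)^{-1}$, both of which are guaranteed by the smallness hypothesis $\gamma a_0^2<1$ together with the standing assumptions of Theorem~\ref{thm:M_CT}.
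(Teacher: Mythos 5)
Your proof is correct, and it takes a genuinely different and cleaner route than the paper's. The paper attacks $(1+[A_0^*]C)[\tilde{\mu}_C^M]u$ head-on, which forces it to evaluate the Helmholtz--Leray projections explicitly: it introduces a potential $g_1$ for the projection of $(\Phi_\gamma u_1,u_2)^{\top}$ and a second potential $g_2$ for the projection of $(a_0^2u_1,0)^{\top}$, uses that $a_0$ depends only on $y$ together with $\textup{div}\,u=0$ to deduce $\Delta g_2=-\gamma^{-1}\Delta g_1$ with matching Neumann data, hence $g_2=-\gamma^{-1}g_1$, and then checks the resulting pointwise cancellation of the $dx$-coefficients. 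Your argument replaces all of this by the single class identity $[\mu_0^M]u-[\tilde{\mu}_C^M]u=\gamma[A_0^*]A_0u$, after which applying $A_0[\mu_0^M]^{-1}$ identifies $A_0[\mu_0^M]^{-1}[\tilde{\mu}_C^M]u$ as $R\,A_0u$ and the rest is formal operator algebra; the projections are absorbed into $[\mu_0^M]^{-1}$ (using only that it kills exact forms and is the inverse of $[\mu_0^M]$ on $\du$) and never computed. What this buys, besides brevity, is slightly more generality: nothing in your argument uses that $a_0$ is independent of $x$, whereas the paper's explicit relation $g_2=-\gamma^{-1}g_1$ does. The only hypotheses you consume are the invertibility of $R$ and of $1+[A_0^*]C$, which are exactly the standing assumptions under which $C$ and $[\mu_C^M]$ are defined, so nothing is missing.
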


\begin{proof}
Let $u = (u_1,u_2)^{\top}\in\du$. Now,
\begin{align*}
 (1+[A_0^*]C)[\tilde{\mu}_C^M](u)
 &= (1+[A_0^*]C)[\Phi_{\gamma}u_1dx + u_2 dy] \\
 &= (1 + \gamma[A_0^*]R^{-1}A_0[\mu_0^M]^{-1})[\Phi_{\gamma}u_1dx + u_2 dy]\\
 &= [\Phi_{\gamma}u_1dx + u_2 dy] 
    + \gamma[A_0^*]R^{-1}A_0\Big(
        \left(\begin{matrix}
            \Phi_{\gamma}u_1\\
            u_2
        \end{matrix}\right)
        - \nabla g_1 \Big) \\
 &= [\Phi_{\gamma}u_1dx + u_2 dy]
    + \gamma [A_0^*]R^{-1}\Big(a_0\Phi_{\gamma}u_1 - a_0\del_x g_1\Big) \\
 &= [\Phi_{\gamma}u_1dx + u_2 dy]
    + \gamma [a_0 R^{-1}\Big(a_0\Phi_{\gamma}u_1 - a_0\del_x g_1\Big) dx ]
\end{align*}
where $g_1$ is determined by $\Delta g_1 = \Phi_{\gamma}\del_x u_1 + \del_y u_2 = -\gamma a_0^2\del_x u_1$, since $a_0$ does not depend on $x$, together with Neumann boundary conditions. The boundary of $M$ is $\del M = \set{y = 0}\cup\set{y=Y\pi}$ because we assume periodicity in $x$. Therefore, $u\in\du$ implies that $u_2|\del M = 0$ whence the relevant boundary condition is $\vv<\nabla g_1|\del M, n> = u_2|\del M = 0$ where $n$ is the outward pointing unit normal vector. 

To show that  $[\mu_C^M]^{-1}[\tilde{\mu}_C^M](u) = [\mu_0^M]^{-1}(1+[A_0^*]C)[\tilde{\mu}_C^M](u) = u$ it thus suffices to check that
\[
 \gamma a_0^2 u_1 - \gamma a_0 R^{-1}\Big(a_0\Phi_{\gamma}u_1 - a_0\del_x g_1\Big) = 0.
\]
We have
\begin{align}
\label{e:g_1g_2}
    R(a_0u_1)
    &= (1-\gamma A_0[\mu_0^M]^{-1}[A_0^*])(a_0 u_1)
    = a_0u_1 -\gamma A_0\Big(
         \left(\begin{matrix}
            a_0^2u_1\\
            0
        \end{matrix}\right)
        - \nabla g_2 
    \Big)
\end{align}
where $g_2$ is determined by $\Delta g_2 = a_0^2\del_x u_1 = -\gamma^{-1}\Delta g_1$, since $a_0$ does not depend on $x$, together with Neumann boundary conditions. The relevant boundary condition is $\vv<\nabla g_2|\del M, n> = 0$. Therefore, $g_2 = -\gamma^{-1}g_1$ and 
\begin{align*}
 \eqref{e:g_1g_2}
 &= a_0u^1 - \gamma a_0(a_0^2u_1 + \gamma^{-1}\del_x g_1)
 = a_0\Phi_{\gamma}u_1 - a_0\del_x g_1.
\end{align*}
\end{proof}

Therefore, in this case, $[\mu_C^M]$ does come from a Riemannian metric $\tilde{\mu}_C^M$ on $M$, and we shall omit the tilde from now on and write $\mu_C^M = \tilde{\mu}_C^M$. Compare with Remark~\ref{rem:mu_C}.

In order for $\mu_C^M$ to be a well-defined metric it is required that $\gamma$ and $a_0^2$ have to be chosen such that 
\begin{equation}
    \label{e:shear_cond1}
    \textup{Condition: }\qquad 
    \gamma a_0^2 < 1
\end{equation}
or else \eqref{e:mu_C^M_Shear} would not be positive definite. 

The map $T: \gau^*\to\gau^*$ is defined according to \eqref{e:DefT} and it can be checked that \eqref{e:g_1g_2} implies
\begin{equation}
    \label{e:T_shear}
    T 
    = 1 + \gamma + \gamma(1+\gamma)R^{-1}A_0[\mu_0^M]^{-1}[A_0^*]
    = \frac{1+\gamma}{1-\gamma a_0^2}
\end{equation}
which is invertible under assumptions \eqref{e:shear_cond1} and \eqref{e:a_0NoInfl}. Similarly, \eqref{e:g_1g_2} yields
\begin{equation}
    \label{e:C_naive_shear}
    C = \frac{\gamma}{1-\gamma a_0^2}A_0[\mu_0^M]^{-1}.
\end{equation}

\subsection{Formal stability condition}\label{sec:shear_form_stab_cond}
Let $v\in\du$. With $k(y) := \int_0^y \Gamma(z)\del_z(\Phi_{\gamma}^{-1}\Gamma)(z)\,dz$ it follows that $\vv<\ad([\mu_C^M]^{-1}\nu_e)^*\nu_e, v> = \int_M\vv<v, \nabla k>\,dxdy  = 0$, whence
\[
 \ad([\mu_C^M]^{-1}\nu_e)^*\nu_e = 0
\]
and $\nu_e = [\Pi_e]$ is also an equilibrium of the controlled Lie-Poisson equation \eqref{e:lpa}. Further, we have 
\begin{equation}
    \label{e:stream_shear}
    \nabla^s\psi_C^e 
    = [\mu_C^M]^{-1}\nu_e 
    = \Phi_{\gamma}^{-1}[\mu_0^M]^{-1}\nu_e 
    = \Phi_{\gamma}^{-1}\nabla^s\psi_0^e. 
\end{equation}
Therefore, the second variation of $\iota^*h_C^M$ at $\nu_e$, where $\iota$ is the inclusion of the coadjoint orbit through $\nu_e$, is given by Proposition~\ref{prop:SV2D}.  
We want to find conditions on $\gamma$ and $a_0$ so that  \eqref{e:SV} is negative definite. (Since $\nabla\psi_0/\nabla\Delta\psi_0 < 0$ and $\Phi_{\gamma}^{-1}>0$, it cannot be positive definite.)

To this end, let $\nu = [\mu_C^M]\nabla^s \psi_C$ and observe
\begin{align}
    \int_M \vv< \nu, [\mu_C^M]^{-1}\nu > \,dxdy
    &= \int_M \left<
        \left(
        \begin{matrix}
            \Phi_{\gamma} &  \\ 
              & 1 
        \end{matrix}
        \right)
        \left(
        \begin{matrix}
            -(\psi_C)_y \\ 
            (\psi_C)_x
        \end{matrix}
        \right)
        ,
                \left(
        \begin{matrix}
            -(\psi_C)_y \\ 
            (\psi_C)_x
        \end{matrix}
        \right)
        \right> \,dxdy 
        \notag \\
    &= 
    - \int_M \psi_C\textup{div}
    \left(
    \left(
    \begin{matrix}
    1 &  \\
      & \Phi_{\gamma}
    \end{matrix}
    \right)
    \nabla\psi_C
    \right) 
    \,dxdy 
    \label{e:Ekin1}
\end{align}
where 
\[
 \Delta^C\var  
 := \textup{div}
    \left(
    \left(
    \begin{matrix}
    1 &  \\
      & \Phi_{\gamma}
    \end{matrix}
    \right)
    \nabla \var  
    \right) 
\]
is an elliptic operator in divergence form. To transform this to a drifted Laplacian, consider the change of variable
\[
 z(y) = \int_0^y\frac{1}{\sqrt{\Phi_{\gamma}(s)}}\,ds
\]
which is well-defined and invertible since the integrand is strictly positive. Thus $y(z)$ exists and satisfies $y'(z) = (z'(y(z)))^{-1} = \sqrt{\Phi_{\gamma}(y(z))}$ and $y''(z) = \by{1}{2}\Phi_{\gamma}'(y(z))$. 
For $\var = \var(x,y)$  let $\tilde{\var}(x,z) = \var(x,y(z))$. It follows that 
\begin{align*}
    \Delta\tilde{\var}(x,z)
    &= \del_x^2\var(x,y(z)) 
        + y''(z)\del_y\var(x,y(z)) 
        + (y'(z))^2\del_y^2\var(x,y(z))\\
    &= \Delta^C\var(x,y(z)) 
        - \by{1}{2}\Phi_{\gamma}'(y(z))\del_y\var(x,y(z))
\end{align*}
whence we obtain the drifted Laplacian, $\Delta_g = \Delta - \nabla g$,
\begin{equation}
    \label{e:shear_Delta_transf}
    \Delta^C\var (x,y(z) )
    = 
    \Big(
        \Delta 
        + \by{1}{2}\vv<\nabla(\log\tilde{\Phi}_{\gamma},\nabla>
    \Big)\tilde{\var} (x,z)  
    = \Delta_g\tilde{\var} (x,z) 
\end{equation}
where $\tilde{\Phi}_{\gamma}(z) = \Phi_{\gamma}(y(z))$ and $g := -\by{1}{2}\log\tilde{\Phi}_{\gamma}$. 
We apply Theorem~\ref{thm:FLL} to $\Delta_g$. Therefore, we look for conditions on $\gamma$ and $a_0$ yielding an (ideally) large constant $K\in\R$ such that 
\begin{equation}
    \label{e:Hess}
    \textup{Hess}_g \ge K \mu_0^M
\end{equation}
where $\textup{Hess}_g$ is the Hessian matrix of $g$. The only non-zero entry in $\textup{Hess}_g$ is $\del_z^2 g$. Thus $K=0$ is the largest possible constant. Since
\begin{equation}
  \label{e:Hess_g}
  \del_z^2g(z) 
  = 
  -\frac{1}{2}\frac{\Phi_{\gamma}\Phi_{\gamma}''-\frac{1}{2}(\Phi_{\gamma}')^2}{\Phi_{\gamma}}(y(z))
  =
  \frac{1}{2}\frac{2\gamma((a_0')^2+a_0a_0'')\Phi_{\gamma}+\frac{1}{2}(\Phi_{\gamma}')^2}{\Phi_{\gamma}}(y(z))
\end{equation}
and $\Phi_{\gamma}>0$, a sufficient condition for $K=0$ is that $\gamma$ and $a_0a_0''$ are positive. We shall assume from now on 
\begin{equation}
    \label{e:a_0NoInfl}
    \textup{Condition: }\qquad
    \gamma>0
    \textup{ and }
    a_0a_0'' \ge 0.
\end{equation}
Theorem~\ref{thm:FLL} now implies that the first Dirichlet eigenvalue $\lam_1(\gamma)$ of $\Delta_g$ on 
$M_{\gamma} = [0,X\pi]\times[0,Z_{\gamma}]$ satisfies
\begin{equation}
    \label{e:shear_lam1}
    \lam_1(\gamma) \ge \frac{\pi^2}{\pi^2 X^2 + Z_{\gamma}^2}
\end{equation}
where $Z_{\gamma}$ is 
\begin{equation}
    \label{e:Zgamma}
    Z_{\gamma} 
    = z(Y\pi)
    = \int_0^{Y\pi}\frac{1}{\sqrt{\Phi_{\gamma}(s)}}\,ds.
\end{equation}
The change of variable $y = y(z)$ and the reverse Poincare inequality \eqref{e:RevPoincare} 
together with $y'(z) = e^{-g(z)}$ therefore yield
\begin{align}
    \notag
    \eqref{e:Ekin1}
    &= 
    -\int_0^{X\pi}\int_0^{Y\pi} \psi_C(x,y)\Delta^C\psi_C(x,y)\,dxdy \\ 
    \notag
    &=    
    -\int_0^{X\pi}\int_0^{Z_{\gamma}} \tilde{\psi}_C(x,z)\Delta_g\tilde{\psi}_C(x,z)\,y'(z)\,dxdz \\
    \notag
    &\le 
    \lam_1(\gamma)^{-1}
    \int_0^{X\pi}\int_0^{Z_{\gamma}} \Big(\Delta_g\tilde{\psi}_C(x,z)\Big)^2\,y'(z)\,dxdz \\
    &= 
    \lam_1(\gamma)^{-1}
    \int_0^{X\pi}\int_0^{Y\pi} \Big(\Delta^C\psi_C(x,y)\Big)^2\,dxdy 
    \label{e:Ekin2}
\end{align}
where, as in \eqref{e:shear_Delta_transf}, $\tilde{\psi}_C(x,z) = \psi_C(x,y(z))$. 
Since
$\nabla^s\psi_C = (-(\psi_C)_y,(\psi_C)_x)^{\top} = [\mu_C^M]^{-1}\nu$ and 
\[
 \om 
 = *d\nu
 = *d \mu_C^M \nabla^s \psi_C
 = *d(-\Phi_{\gamma} (\psi_C)_y dx + (\psi_C)_x dy)
 = \Delta_C \psi_C 
\]
it follows that 
\begin{align}
\notag
    \eqref{e:Ekin2} 
    &= \lam_1(\gamma)^{-1}
        \int_0^{X\pi}\int_0^{Y\pi} \om^2\,dxdy 
    \le
    \frac{\pi^2 X^2 + Z_{\gamma}^2}{\pi^2} \int_M \om^2\,dxdy.
    \notag
\end{align}
With \eqref{e:SV2D} we can therefore express the second variation of $h_C^M$, restricted to the orbit $\mathcal{O}$ through $\nu_e$, as
\begin{align*}
    D^2(\iota^*h_C^M)(\nu_e)(\nu,\nu)
    &= 
    \vv<\nu, [\mu_C^M]^{-1}\nu> 
    +
    \int_M \Phi_{\gamma}^{-1}(y)\frac{\nabla\psi_0}{\nabla\Delta\psi_0} \om^2\,dxdy \\
    &\le
    \frac{\pi^2 X^2 + Z_{\gamma}^2}{\pi^2} \int_M \om^2\,dxdy
    - \int_M \Phi_{\gamma}^{-1}(y) \om^2\,dxdy.
\end{align*}
Let $\overline{\Phi}_{\gamma} = \textup{max}_{0\le y \le Y\pi}\Phi_{\gamma}(y)$ and $\underline{\Phi}_{\gamma} = \textup{min}_{0\le y \le Y\pi}\Phi_{\gamma}(y)$. 
Notice that $Z_{\gamma}^2\le \pi^2 Y^2 / \underline{\Phi}_{\gamma}$
and
$\int_M\Phi_{\gamma}^{-1}\om^2\,dxdy 
\ge 
\overline{\Phi}_{\gamma}^{-1}\int_M\om^2\,dxdy$.
It follows that a sufficient condition for the second variation  
$D^2(\iota^*h_C^M)(\nu_e)(\nu,\nu)$ to be negative definite is
\begin{equation}
    \label{e:shear_ND_cond}
    \textup{Condition: }\qquad
    \overline{\Phi}_{\gamma} X^2 + \frac{\overline{\Phi}_{\gamma}}{\underline{\Phi}_{\gamma}}Y^2 < 1.
\end{equation}

\subsection{Stability of controlled shear flow}
We recall the main assumptions of this section.
Let $\by{1}{2}\le Y<1$ and $X>0$ arbitrary. Consider shear flow on $M = [0,X\pi]\times[0, Y\pi]$ and $\Pi_e = \sin(y+\by{\pi}{2})dx$, as above. 
Then $\nu_e = [\Pi_e]$ is a stationary solution of the (uncontrolled) Euler equation
\eqref{e:lp1_shear}, which is stable for sufficiently small $X$ and unstable for large $X$. 
The fluid is assumed to consist of charged particles subject to an external magnetic field $\mathcal{M} = dA_0$ where $A_0 = a_0(y)dx$. 


In the following, stability is understood in the nonlinear sense and with respect to perturbations whose circulations around $\del M$ vanish. 

\begin{theorem}
\label{thm:shear_formal_stab}
Let $a_0(y) = b(\om_e(y))$, where $\om_e = -\cos(y+\by{\pi}{2})$ is the vorticity function associated to $\Pi_e$, 
let
$b: [0,1]\to[\underline{b},\overline{b}]\subset(0,1)$, $\om\mapsto \overline{b} - (\overline{b}-\underline{b})\om$ and 
\begin{align*} 
 &\overline{b} = \sqrt{1-\by{\alpha}{\beta}},
 \quad
 \underline{b} = \sqrt{1-\alpha},
 \quad
 \alpha = \frac{r}{X^2},\quad
 \beta = \frac{Y^2+r}{Y^2},
 \quad 
 r = \frac{1-Y^2}{3}.
\end{align*}
Then the control \eqref{e:C_shear}, with $\gamma = 1$, yields a feedback system \eqref{e:lpa} which is Lie-Poisson, and $\nu_e$ is a stable equilibrium for this system. 
\end{theorem}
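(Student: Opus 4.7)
The plan is to verify that the specific choices $\gamma = 1$ and $a_0 = b \circ \omega_e$ satisfy all three hypotheses identified in Section~\ref{sec:shear_form_stab_cond}, and then to invoke Arnold's convexity argument to upgrade the resulting formal stability to nonlinear Lyapunov stability. The three hypotheses to check are: the admissibility condition \eqref{e:shear_cond1}, so that $[\mu_C^M]$ is genuinely Riemannian; the drift sign condition \eqref{e:a_0NoInfl}, guaranteeing $\textup{Hess}_g \ge 0$ so that the reverse Poincar\'e estimate \eqref{e:shear_lam1} applies; and the quantitative inequality \eqref{e:shear_ND_cond}, which forces the second variation $D^2(\iota^*h_C^M)(\nu_e)$ restricted to the orbit to be negative definite. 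That $\nu_e$ remains an equilibrium of the controlled Lie--Poisson system \eqref{e:lpa} follows as in Section~\ref{sec:shear_form_stab_cond} once one knows that $a_0$ depends only on $y$, which is immediate because $\omega_e$ does.

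For the admissibility and sign conditions, note that $\omega_e(y) = -\cos(y + \pi/2) = \sin(y)$ maps $[0, Y\pi]$ onto $[0, 1]$ (using $Y \ge 1/2$), so the affine map $b$ sends $a_0$ into $[\underline{b}, \overline{b}] \subset (0,1)$; hence $\gamma a_0^2 = a_0^2 < 1$, which is \eqref{e:shear_cond1}. Differentiating $a_0 = b \circ \omega_e$ and using $\omega_e'' = -\omega_e$ together with the constant negative slope of $b$ yields $a_0''(y) = (\overline{b} - \underline{b}) \sin(y) \ge 0$, while $a_0(y) \ge \underline{b} > 0$, so $a_0 a_0'' \ge 0$. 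Combined with $\gamma = 1 > 0$, this gives \eqref{e:a_0NoInfl}.

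The third condition reduces to an algebraic identity. From $\overline{\Phi}_\gamma = 1 - \underline{b}^2 = \alpha$ and $\underline{\Phi}_\gamma = 1 - \overline{b}^2 = \alpha/\beta$ one computes
\[
 \overline{\Phi}_\gamma X^2 + \frac{\overline{\Phi}_\gamma}{\underline{\Phi}_\gamma} Y^2
 = \alpha X^2 + \beta Y^2
 = r + (Y^2 + r)
 = \frac{Y^2 + 2}{3},
\]
which is strictly less than $1$ since $Y < 1$. By Section~\ref{sec:shear_form_stab_cond} this implies criterion \eqref{e:StabCrit_2D_negdef}, and the standard Arnold convexity argument \cite{A66,AK98} then yields nonlinear stability for perturbations of vanishing boundary circulation. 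The main obstacle is the rigidity of \eqref{e:shear_ND_cond}: the reverse Poincar\'e bound is essentially tight, so $\overline{b}$ and $\underline{b}$ must be chosen to solve $1 - \underline{b}^2 = \alpha$ and $1 - \overline{b}^2 = \alpha/\beta$ simultaneously, and in such a way that $a_0 a_0'' \ge 0$ continues to hold. The affine choice of $b$ combined with the sinusoidal profile of $\omega_e$ is precisely what achieves this exact balance, which is the content of the theorem.
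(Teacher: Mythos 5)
Your proposal is correct and follows essentially the same route as the paper: verify \eqref{e:shear_cond1}, \eqref{e:a_0NoInfl} and \eqref{e:shear_ND_cond} for the given $b$, with the same algebra $\overline{\Phi}_1 X^2+\tfrac{\overline{\Phi}_1}{\underline{\Phi}_1}Y^2=\alpha X^2+\beta Y^2=1-r=\tfrac{Y^2+2}{3}<1$, and then pass from formal to nonlinear stability via the Arnold convexity method. The one place where you compress what the paper actually does is the last step: the paper must build the conserved functional $h_C^M+\int_M\phi_C\circ\om$ adapted to the \emph{controlled} metric, using the relation $\psi_C^e=\Psi_C(\om_e)$ with $\Psi_C'(\om)=-(1-b(\om)^2)^{-1}$, extend $b$ off $[0,1]$, and use the strict margin $r>0$ in $1-r<1$ to absorb the resulting $\kappa$-error in the lower bound for $-\hat{H}_2$; since your computation exhibits exactly that margin, the cited argument goes through and this is a presentational rather than a mathematical gap.
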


Hence the control \eqref{e:C_shear} stabilizes the equilibrium $\nu_e$ for $\gamma=1$.
The magnetic field is 
\[
\mathcal{M} 
= -a_0'\,dx\wedge dy 
= (\overline{b}-\underline{b})\om_e'\,dx\wedge dy 
= (\overline{b}-\underline{b})\sin(y+\by{\pi}{2})\,dx\wedge dy
= (\overline{b}-\underline{b})\Pi_e \wedge dy
\]
and the stabilizing control $C: \du^*\to\gau^*$ is, according to equation~\eqref{e:C_naive_shear},
\[
 C = \frac{b(\om_e)}{1-b(\om_e)^2}dx\circ[\mu_0^M]^{-1}.
\]
Concretely, $\nu_e$ is stabilized, with respect to perturbations in $\nu$, by applying the closed loop equations that arise when feeding the control 
\begin{equation}
    \label{e:fb_cntrl}
     q = -\frac{b(\om_e)\,dx([\mu_0^M]^{-1}\nu)}{1-b(\om_e)^2}
\end{equation}
into \eqref{e:lp1_shear} and \eqref{e:lp3_shear}. 

\begin{remark}[Physical significance of the control]
The unforced shear flow is given by equations \eqref{e:lp1_shear}-\eqref{e:lp3_shear}. As shown in Appendix~\ref{app:GM} this system corresponds to the charged Euler equations \eqref{1e:EM_u_d}-\eqref{1e:cont_d}. 
The above control now means that the charge $q$ of the physical system is subjected to the feedback law \eqref{e:fb_cntrl} which depends on observations of the fluid's momentum $\nu$. 
\end{remark}

\begin{proof}
The conditions for formal stability are, by construction,  \eqref{e:shear_cond1}, \eqref{e:a_0NoInfl} and \eqref{e:shear_ND_cond}.
To see that these are satisfied, note that 
\begin{equation}
\label{e:noInflcos}
     a_0(y)a_0''(y)
 = - b(\om_e(y))(\overline{b}-\underline{b})\om_e''(y)
 = - b(\om_e(y))(\overline{b}-\underline{b})\cos(y+\by{\pi}{2})\ge0
\end{equation}
for $y\in[0,Y\pi]$. Further, we have $a_0<1$ and 
\begin{equation}
    \label{e:form_sta_expl}
     \overline{\Phi}_1X^2+\frac{\overline{\Phi}_1}{\underline{\Phi}_1}Y^2
 = (1-\underline{b}^2) X^2 + \frac{1-\underline{b}^2}{1-\overline{b}^2} Y^2 
 = \alpha X^2 + \beta Y^2 
 = 1 - r 
 < 1. 
\end{equation}
Consider the functional relation $\Psi_0\circ\om_e = \psi_0^e$ which is simply given by $\Psi_0(\om) = -\om$.
Here  $\psi_0^e(y) = \cos(y+\by{\pi}{2})$ is the stream function associated to $[\mu_0^M]^{-1}\nu_e$.
To find a similar relation for $\psi_C^e$, which is the stream function of $[\mu_C^M]^{-1}\nu_e$, use \eqref{e:stream_shear} and consider
\begin{align*}
    \psi_C^e(y)
    &= \int_0^y \del_s\psi_C^e(s)\,ds
    = \int_0^y \Phi_1(s)^{-1}\del_s\psi_0^e(s)\,ds
    = \int_0^y (1-b(\om_e(s))^2)^{-1}\del_s(\Psi_0\circ\om_e)(s)\,ds\\
    &= \int_0^y (1-b(\om_e(s))^2)^{-1} \Psi_0'(\om_e(s)) \om'_e(s) \,ds
    = \int_0^y \del_s(\Psi_C\circ\om_e)(s)\,ds
    = \Psi_C(\om_e(y))
\end{align*}
where $\Psi_C(\om) := \int_0^{\om} (1-b(\eta)^2)^{-1} \Psi_0'(\eta) \,d\eta = -\int_0^{\om} (1-b(\eta)^2)^{-1} \,d\eta$. We thus have $\nabla\psi_C^e = \Psi_C'(\om_e)\nabla\om_e$.

The rest of the proof follows now \cite[Ch.~2]{AK98}.
Extend $b$ to a smooth function on $\R$ such that $0 < \underline{b}-\eps \le b(\om) \le \overline{b}+\eps < 1$ for a small $\eps>0$ and all $\om\in\R$. Thus $\Phi_1$ and $\Psi_C$ are extended to $\R$ as-well.
For $\tau\in\R$ let
\[
 \phi_C(\tau) = \int_0^{\tau}\Psi_C(\theta)\,d\theta
\]
whence 
\[
 -\phi_C''(\om) 
 = -\Psi_C'(\om) 
 = \frac{1}{\Phi_1(\om)}
 \ge \frac{1}{\overline{\Phi}_1+2\underline{b}\eps-\eps^2}
 = \frac{1}{\overline{\Phi}_1} - \kappa 
\]
for all $\om\in\R$ and where $\kappa = \eps(2\underline{b}-\eps)/(\overline{\Phi}_1(\overline{\Phi}_1 + (2\underline{b}-\eps)\eps ) )$.

For $\nu\in\du^*$ define
\[
 H(\nu) = h_C^M(\nu) + \int_M\phi_C\circ\om\,dxdy
\]
where $\om$ is the vorticity of $\nu$. Since \eqref{e:lpa} is vorticity preserving, $H$ is conserved along solutions. Hence the same is true for $\hat{H}(\nu) = H(\nu+\nu_e)-H(\nu_e)$,
where $\nu + \nu_e$ is a perturbed solution. 
One decomposes $\hat{H} = \hat{H}_1 + \hat{H}_2$ where
\[
 \hat{H}_1(\nu)
 = \int_B\Big(\vv<\nu,[\mu_C^M]^{-1}\nu_e> + \phi_C'(\om_e)\om\Big)\,dxdy 
\]
and 
\[
  \hat{H}_2(\nu)
  = \int_B\Big(\by{1}{2}\vv<\nu,[\mu_C^M]^{-1}\nu> 
  + \phi_C(\om+\om_e) - \phi_C(\om_e)
  - \phi_C'(\om_e)\om\Big)\,dxdy .
\]
We only consider perturbations $\nu$ whose circulation around $\del M$ vanishes. The circulation of $\nu+\nu_e$ around $\del M$ is preserved by the Kelvin Circulation Theorem and equals the circulation of $\nu_e$ around $\del M$. It follows that  $\hat{H}_1(\nu_t)=0$, and $\hat{H}_2(\nu_t)$ is constant in $t$. (The details of this argument are in the proof of \cite[Ch.~2, Thm.~4.3]{AK98}.) 

The formal stability condition~\eqref{e:form_sta_expl} now implies that we can bound the perturbation's vorticity by
\begin{align*}
 -\hat{H}_2(\nu_0)
 &= -\hat{H}_2(\nu_t)
 = 
 -\int_M\Big(
 \by{1}{2}\vv<\nu_t,[\mu_C^M]^{-1}\nu_t> 
 + \phi_C(\om_t+\om_e) - \phi_C(\om_e) - \phi_C'(\om_e)\om_t
 \Big)\,dxdy  \\
 &\ge 
 -\by{1}{2}(X^2+\underline{\Phi}_1^{-1} Y^2)\int_M\om_t^2\,dxdy
 +\by{1}{2}(\overline{\Phi}_1^{-1}-\kappa)\int_M\om_t^2\,dxdy \\
 &= 
 \by{1}{2}\overline{\Phi}_1^{-1}
 \Big(
    -\overline{\Phi}_1 X^2 
    - \overline{\Phi}_1\underline{\Phi}_1^{-1}Y^2
    + 1 - \kappa \overline{\Phi}_1
 \Big)\int_M\om_t^2\,dxdy\\
 &= 
 \by{1}{2}\overline{\Phi}_1^{-1}
 \Big(r - \kappa\overline{\Phi}_1 \Big)\int_M\om_t^2\,dxdy
\end{align*}
and this completes the proof by choosing $\eps$ such that $\eps(2\underline{b}-\eps)/(\overline{\Phi}_1 + (2\underline{b}-\eps)\eps ) < r$.
\end{proof}

\begin{remark}
We have assumed that $Y\ge\by{1}{2}$ to ensure that $\om_e$ assumes all values between $0$ and $1$, otherwise the definition of $b(\om)$ would have to be slightly adapted as $b(\om) = \overline{b}-(\overline{b}-\underline{b})\om/\textup{max}(\om_e)$. The assumption $Y<1$ is more restrictive as it is necessary to apply the bound of Theorem~\ref{thm:FLL}. If there is a sharper bound for the first eigenvalue $\lam_1$ of the drifted Laplacian $\Delta_g$ defined in \eqref{e:shear_Delta_transf}, then this assumption may be relaxed. However, the range of $y$ is also constrained by the requirement \eqref{e:noInflcos}.  
\end{remark}


\section{Appendix: Drifted Laplacian}\label{sec:appendix}

\subsection{Drifted Laplacian}\label{app:drift_Delta}
Let $M$ be a compact domain in $\R^n$ with smooth boundary $\del M$. 
Let $g: M\to\R$ be a smooth function and define the drifted Laplacian 
\begin{equation}
    \label{e:DriftDelta}
    \Delta_g = \Delta - \vv<\nabla g, \nabla>
\end{equation}
where $\Delta$ is the ordinary Laplacian. 
One can check that $\Delta_g$ is symmetric with respect to the weighted measure $e^{-g}dx$:
\begin{equation}
    \label{e:Green}
    -\int_M\var(\Delta_g\psi)\,e^{-g} dx
    = \int_M\vv<\nabla\var,\nabla\psi>\,e^{-g} dx
    = -\int_M\psi(\Delta_g\var)\,e^{-g} dx
\end{equation}
where $\var,\psi$ are functions on $M$ which vanish at the boundary. The drifted Laplacian is also called Witten-Laplacian in \cite{FLL13}. The eigenvalue equation for $\Delta_g$ is $-\Delta_g\psi = \lam\psi$. 

\begin{theorem}[\cite{FLL13}]\label{thm:FLL}
Let $(M,\mu_0^M)$ be an $n$-dimensional compact Riemannian manifold, and let $g\in C^2(M)$. Suppose that there exists a constant $K\in\R$ such that
\begin{equation}
    \label{e:FLL_K}
     \textup{Ric}+\textup{Hess}_g \ge K\mu_0^M.
\end{equation}
Then the first non-zero eigenvalue $\lam_1$ of the Witten-Laplacian $\Delta_g$ satisfies
\begin{equation}
    \label{e:FLL_lam1}
    \lam_1 \ge \sup_{s\in(0,1)}\set{4s(1-s)\frac{\pi^2}{d^2} + sK}
\end{equation}
where $d$ is the diameter of $(M,\mu_0^M)$.
\end{theorem}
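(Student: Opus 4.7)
The plan is to use the Bakry--\'Emery curvature--dimension condition $CD(K,\infty)$, which is exactly the hypothesis $\textup{Ric}+\textup{Hess}_g\ge K\mu_0^M$ when one views $\Delta_g$ as the natural symmetric diffusion operator with respect to the weighted measure $e^{-g}\,\textup{vol}_M$ (cf.\ \eqref{e:Green}). The argument splits into an a priori gradient estimate for the first eigenfunction, followed by integration along a minimizing geodesic---the scheme of Li--Yau, Zhong--Yang and Bakry--Qian.

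First I would write down the weighted Bochner identity
\begin{equation*}
    \tfrac{1}{2}\Delta_g|\nabla u|^2 = |\textup{Hess}_u|^2 + \langle\nabla u,\nabla\Delta_g u\rangle + (\textup{Ric}+\textup{Hess}_g)(\nabla u,\nabla u),
\end{equation*}
whose final term is bounded below by $K|\nabla u|^2$ by hypothesis. Let $u$ be a first nontrivial eigenfunction, $-\Delta_g u=\lam_1 u$, normalized so that $\sup u=1$ and $\inf u=-a$ with $a\in[0,1]$ (swapping sign if needed). The core step is to establish a gradient bound $|\nabla u|^2\le F(u)$, where $F$ is a nonnegative function, vanishing at the endpoints $u=-a,\,1$, constructed as the solution of a one-dimensional model eigenvalue problem depending on a parameter $s\in(0,1)$. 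Applying the maximum principle to $P:=|\nabla u|^2-F(u)$ -- at an interior maximum $\nabla P=0$ and $\Delta_g P\le 0$ -- and inserting these relations into the Bochner inequality together with $-\Delta_g u=\lam_1 u$, produces an ODE inequality for $F$ in which the weight $s$ appears naturally as the coefficient distributing the Bochner terms between the Hessian part and the curvature part.

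With $|\nabla u|^2\le F(u)$ in hand, I would integrate along a minimizing geodesic $\gamma:[0,d]\to M$ joining the points where $u$ attains its minimum and maximum. Since $|(u\circ\gamma)'(t)|\le|\nabla u|\le\sqrt{F(u)}$, separation of variables gives
\begin{equation*}
    \int_{-a}^{1}\frac{du}{\sqrt{F(u)}}\le d.
\end{equation*}
An explicit evaluation of this integral---essentially a trigonometric substitution producing the factor $4s(1-s)\pi^2/d^2$ together with a linear shift producing the $sK$ correction---translates into $\lam_1\ge 4s(1-s)\pi^2/d^2+sK$. Optimizing over $s\in(0,1)$ yields the stated supremum form: at $K=0$ and $s=\tfrac{1}{2}$ one recovers the Zhong--Yang bound $\pi^2/d^2$, while for large $K$ the linear $sK$ term dominates and the estimate approaches the Lichnerowicz-type inequality.

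The main obstacle is the construction and sharpness analysis of the comparison function $F$. Because the hypothesis is dimension-free ($CD(K,\infty)$ rather than $CD(K,N)$), the standard pointwise bound $|\textup{Hess}_u|^2\ge(\Delta u)^2/n$ cannot be used directly; instead one must decompose $\textup{Hess}_u$ into components parallel and orthogonal to $\nabla u$ and distribute the loss via the parameter $s$, which is what makes the eventual constant $4s(1-s)\pi^2/d^2+sK$ take a convex-combination form. Verifying that the resulting ODE for $F$ admits a solution that is positive on $(-a,1)$, vanishes at the correct endpoints, and reproduces the claimed constant for every admissible $s$ is the technical heart of the argument and the place where the method of Bakry--Qian must be adapted to the weighted setting.
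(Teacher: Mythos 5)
The paper offers no proof of this statement: Theorem~\ref{thm:FLL} is imported verbatim from \cite{FLL13} and used as a black box in Section~\ref{sec:shear_form_stab_cond}, so there is no internal argument to compare yours against. Judged on its own terms, your outline correctly identifies the strategy that the cited reference actually follows --- the weighted Bochner identity $\tfrac{1}{2}\Delta_g|\nabla u|^2 = |\textup{Hess}_u|^2 + \langle\nabla u,\nabla\Delta_g u\rangle + (\textup{Ric}+\textup{Hess}_g)(\nabla u,\nabla u)$ (which you state correctly), a maximum-principle gradient estimate $|\nabla u|^2\le F(u)$, and integration of $du/\sqrt{F(u)}$ along a minimizing geodesic. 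The limiting cases you check ($s=\tfrac12$, $K=0$ giving Zhong--Yang; $s\to 1$ giving the Lichnerowicz-type bound $\lambda_1\ge K$) are consistent with the stated estimate.

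However, as a proof this is a plan with the decisive step missing, and you say so yourself: the comparison function $F$ is never constructed, and the maximum-principle computation that closes the Bochner inequality with precisely the constant $4s(1-s)\pi^2/d^2+sK$ is not carried out. That computation is not routine in the dimension-free setting, since $|\textup{Hess}_u|^2\ge(\Delta u)^2/n$ is unavailable and one must instead exploit only the single component $\textup{Hess}_u(\nabla u,\nabla u)/|\nabla u|^2$, which is where the parameter $s$ and the convex-combination form of the bound actually come from. A second unaddressed point is the normalization: with $\sup u=1$ and $\inf u=-a$ for some $a\in(0,1]$, the integral $\int_{-a}^{1}du/\sqrt{F(u)}$ need not produce the full period $\pi$ of the model problem, so one must either recenter $u$ by its midrange or build the asymmetry into $F$; without handling this, the final inequality loses a factor and does not yield \eqref{e:FLL_lam1}. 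For the purposes of this paper the correct move is simply to cite \cite{FLL13}, as the author does; to turn your sketch into a proof you would need to supply the $s$-dependent gradient estimate in full.
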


In Section~\ref{sec:shear_form_stab_cond} we apply this result to the case of a flat $2$-dimensional domain to bound the constant $\lam_1$ in the reverse Poincar\'e inequality
\begin{equation}
    \label{e:RevPoincare}
    \int_M(\Delta_g\var)^2\,e^{-g}dx
    \ge -\lam_1\int_M \var(\Delta_g\var)\,e^{-g}dx.
\end{equation}

\section{Appendix: some concepts in geometric mechanics}\label{app:GM}

This appendix collects the some notation and concepts from geometric mechanics as they are used in the body of the paper. 

\subsection{Kaluza-Klein metrics}
Let $\pi: P\to M$ be a finite dimensional principal fiber bundle with structure group $K$ acting from the right. The principal right action is denoted by $r^k: P\to P$ for $k\in K$. Let $\ko$ be the Lie algebra of $K$ and $\A\in\Om^1(M,\ko)$ be a principal bundle connection form. 
For $k\in K$ and $\xi\in TP$ it follows that $\A(Tr^k.\xi) = \Ad(k)^{-1}\A(\xi)$. 
The vertical space of the principal bundle is canonically defined as $\Ver = \ker T\pi$. The horizontal space corresponding to the connection is defined as $\Hor = \ker\A$.
See \cite{MichorDG} for background on principal bundles. 

Assume that $K$ is compact and endowed with a positive definite symmetric bilinear form $\I$ and that $M$ carries a Riemannian metric $\mu^M$. Then the data $(\mu^M,\I,\A)$ give rise to a Riemannian metric $\mu^P$ on $P$ by the following prescriptions:
\begin{itemize}
    \item 
    On $\Hor=\ker\A\cong P\times_M TM$ the metric $\mu^P$ is given by $\mu^M$. 
    \item
    On $\Ver\cong P\times\ko$ the metric $\mu^P$ is given by $\I$. 
    \item
    $\Hor$ and $\Ver$ are orthogonal to each other with respect to $\mu^P$.  
\end{itemize}
This metric is called the Kaluza-Klein metric. It can be expressed, with $\xi,\eta\in T_pP$, as
\begin{equation}
\label{e:app-KK}
    \mu_p^P(\xi,\eta)
    = \mu^M_{\pi(p)}(T_p\pi.\xi, T_p\pi.\eta)
        + \I(\A_p\xi,\A_p\eta). 
\end{equation}
To highlight the dependence of $\mu^P$ on the triple $(\mu^M,\I,\A)$ we write $\mu^P = \mu^{KK}(\mu^M,\I,\A)$. See also \cite{MR-mas}.

\subsection{Trivial principle bundle and metric formula}
Let the notation be as in Section~\ref{sec:SD_struc}.    

\begin{lemma}[Metric formula]\label{pr:metric1}
Assume that $TM=M\times\R^n$ and that $P=M\times K$.
Let  $(\mu^M,\I,A)$ denote a set of Kaluza-Klein data. 
\begin{enumerate}[\up (1)]
    \item 
    Using the splitting 
$TP = M\times K\times\R^n\times\ko$, 
the isomorphism $\mu^P = \mu^{KK}(\mu^M,\I,A): \R^n\times\ko\to\R^n\times\ko^*$ can be expressed as 
\begin{equation}
    \mu^P 
    =
    \left(\begin{matrix}
    \mu^M+A^*\I A && A^*\I \\
    \I A && \I
    \end{matrix}\right)
\end{equation}
with inverse
\begin{equation}
    (\mu^P)^{-1} 
    =
    \left(\begin{matrix}
    (\mu^M)^{-1} && -(\mu^M)^{-1}A^* \\
    -A(\mu^M)^{-1} && \I^{-1}+A(\mu^M)^{-1}A^*
    \end{matrix}\right)
\end{equation}
\item
Since $\ao = \du\times\gau$, we obtain the isomorphisms
\begin{align*}
    &[\mu^P]: 
        \ao\to\ao^*,\quad
        \Big(u,X\Big)^{\top}
        \mapsto 
        \Big([ (\mu^M + A^*\I A)u + A^*\I X], \I(Au + X)\Big)^{\top} \\
    &[\mu^P]^{-1}:
        \ao^*\to\ao,\quad 
        \Big([\Pi],q\Big)^{\top} 
        \mapsto
        \Big( [\mu^M]^{-1}[\Pi-A^*q],
          -A[\mu^M]^{-1}[\Pi] + Q q
        \Big)^{\top}
\end{align*}
where $Q q := \I^{-1}q + A[\mu^M]^{-1}[A^*]q$ and $[A^*]q := [A^*q]$.
\item
Let $L_1: \du^*\to\du^*$ and $L_2: \gau^*\to\gau^*$ be invertible operators such that $L_1^* = [\mu^M]^{-1}L_1[\mu^M]$, $L_2^* = \I^{-1}L_2\I$ and $\Ad(k)^* L_2\I \Ad(k) = L_2\I$. Let $\tau: \du\to\gau$ be a linear map. Then the Kaluza-Klein inner product $[\mu] := \mu^{KK}(L_1[\mu^M],L_2\I, A+\tau)$ on $\ao$, given by
\[
 [\mu]\Big((u,X),(v,Y)\Big) 
 = \vv<L_1[\mu^M] u,v> + \vv< L_2\I (X + A_{\tau}u), Y + A_{\tau}(v) >
\]
for $(u,X),(v,Y)\in\du\times\gau$ and where $A_{\tau} = A+\tau$, can be expressed as
\begin{equation}
    [\mu] 
    =
    \left(\begin{matrix}
    L_1[\mu^M]+[A_{\tau}^*]L_2\I A_{\tau} && [A_{\tau}^*] L_2 \I \\
    L_2\I A_{\tau} && L_2\I
    \end{matrix}\right)
\end{equation}
with inverse
\begin{equation}
    [\mu]^{-1} 
    =
    \left(\begin{matrix}
    [\mu^M]^{-1}L_1^{-1} && -[\mu^M]^{-1}L_1^{-1}[A_{\tau}^*] \\
    -A_{\tau}[\mu^M]^{-1}L_1^{-1} && \I^{-1}L_2^{-1} + A_{\tau}[\mu^M]^{-1}L_1^{-1}[A_{\tau}^*]
    \end{matrix}\right)
\end{equation}
\end{enumerate}
\end{lemma}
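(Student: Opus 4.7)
My plan is to prove Part~(1) directly from the definition \eqref{e:app-KK}, then pass to the $L^2$-setting to obtain Parts~(2) and~(3).

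For Part~(1), I would work in right-trivialized coordinates, where at a point $(x,e)\in P=M\times K$ a tangent vector $\xi\in T_{(x,e)}P = \R^n\times\ko$ is written as $(u_x,X)$. In these coordinates $T_{(x,e)}\pi\cdot(u_x,X)=u_x$ and the connection form reads $A_{(x,e)}(u_x,X) = X + A(x)u_x$, so that $\textup{Hor}=\{X+Au_x=0\}$ as used elsewhere in the paper. Plugging this into the defining formula $\mu^P(\xi,\eta) = \mu^M(T\pi\xi,T\pi\eta) + \I(A\xi,A\eta)$ and expanding,
\[
 \mu^P\bigl((u_x,X),(v_x,Y)\bigr)
  = \mu^M(u_x,v_x) + \I(Au_x,Av_x) + \I(Au_x,Y) + \I(X,Av_x) + \I(X,Y),
\]
one reads off the $2\times 2$ block matrix displayed in Part~(1). $K$-equivariance of $A$ together with $\Ad(K)$-invariance of $\I$ shows that the same formula holds at any $(x,k)$, so there is no loss in computing at $k=e$. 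Verifying the inverse is then a direct block-matrix computation: multiplying the two matrices and using that $A^*\I A(\mu^M)^{-1}A^*$ cancels between the off-diagonal entries yields the identity in each block.

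For Part~(2), the formula is obtained by integrating the pointwise bilinear form of Part~(1) against $\vol_M$ and then rewriting everything in terms of the maps $[\mu^M]\colon\du\to\du^*$ and $\I\colon\gau\to\gau^*$. The only subtlety is that the adjoint of $A\colon\du\to\gau$ with respect to the $L^2$-pairings lands in $\du^*=\Omega^1(M)/d\F(M)$, hence must be denoted $[A^*]$; correspondingly, $[\mu^M]^{-1}$ is given by the Helmholtz--Hodge--Leray projection, as recalled in Section~\ref{sec:SD_struc}. With this dictionary the algebraic identities verified pointwise in Part~(1) carry over verbatim, and the inverse formula just re-expresses Part~(1) with $[A^*]$ replacing $A^*$ in the off-diagonal entries. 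A short check confirms that the result of $[\mu^P]^{-1}$ lands in $\du\times\gau$: the first component is automatically divergence-free because it is obtained via $[\mu^M]^{-1}$, and the second is a section of $\F(M,\ko)$ since $A$ applied to a vector field takes values in $\ko$.

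For Part~(3), the strategy is to interpret $[\mu]=\mu^{KK}(L_1[\mu^M],L_2\I,A+\tau)$ as a Kaluza--Klein inner product on $\ao$ built from the modified data. The symmetry conditions $L_1^*=[\mu^M]^{-1}L_1[\mu^M]$ and $L_2^*=\I^{-1}L_2\I$ are precisely what is needed for $L_1[\mu^M]$ and $L_2\I$ to be symmetric bilinear forms on $\du$ and $\gau$ respectively: indeed, applied to $\alpha=[\mu^M]v$ the first condition yields $\langle L_1[\mu^M]v,u\rangle = \langle L_1[\mu^M]u,v\rangle$, and similarly for $L_2\I$. The condition $\Ad(k)^*L_2\I\,\Ad(k)=L_2\I$ is the pointwise $\Ad(K)$-invariance needed for $L_2\I$ to play the role of $\I$ in the Kaluza--Klein construction. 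With these in place, the computation of the matrix entries is formally identical to Part~(2): just substitute $[\mu^M]\leadsto L_1[\mu^M]$, $\I\leadsto L_2\I$ and $A\leadsto A_\tau$ throughout. The inverse is obtained by the same block-inversion, now with $L_1^{-1}$ and $L_2^{-1}$ appearing in the appropriate positions.

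The whole argument is essentially one extended bookkeeping exercise and I do not expect any serious obstacle; the only points that require care are (i) confirming that the right-trivialized expression for $A$ on the trivial bundle is really $A(x)u_x + X$ so that Part~(1) reduces to a finite-dimensional linear-algebra computation, and (ii) tracking the passage between $A^*$ and $[A^*]$ in the $L^2$ formulation, since $[A^*]$ acts on equivalence classes modulo $d\F(M)$.
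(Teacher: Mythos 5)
Your proposal is correct and follows essentially the same route as the paper, whose entire proof consists of expanding $\mu^M(u,v)+\I(X+Au,Y+Av)$ and matching it against the block matrix, leaving the inverse check and Parts~(2)--(3) as the same routine substitutions you describe. Your additional remarks on the $A^*$ versus $[A^*]$ bookkeeping and on the symmetry conditions for $L_1[\mu^M]$ and $L_2\I$ are accurate elaborations of steps the paper leaves implicit.
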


\begin{proof}
Let $(u,X)^{\top},(v,Y)^{\top}\in\R^n\times\ko$. It suffices to verify that 
\begin{align*}
\left<
\mu^P
 \left(\begin{matrix}
    u\\
    X
    \end{matrix}\right),
 \left(\begin{matrix}
    v\\
    Y
    \end{matrix}\right)
\right>
&=
\vv<\mu^M(u),  v>
+ \vv<\I(X+Au), Y+Av> \\
&=
\left<
 \left(\begin{matrix}
    \mu^M + A^*\I A &&  A^*\I \\
    \I A && \I
    \end{matrix}\right)
 \left(\begin{matrix}
    u\\
    X
    \end{matrix}\right),
 \left(\begin{matrix}
    v\\
    Y
    \end{matrix}\right)
\right>
\end{align*}
\end{proof}

\subsection{Lie-Poisson systems}
Let $G$ be a finite or infinite-dimensional Lie group with dual Lie algebra $\gu^*$. On $\gu^*$ there exists a canonical Poisson bracket. This is defined in terms of the Lie bracket $[.,.]$ on $\gu$ as 
\begin{equation}
    \label{e:app-KKS}
    \{f,g\}
    = 
    -\vv<\mu,[\frac{\delta f}{\delta \mu},\frac{\delta g}{\delta\mu}]>
\end{equation}
for $\mu\in\gu^*$ and functions $f,g:\gu^*\to\R$. Here $\delta f/\delta \mu\in\gu$ is the variational derivative,
\[
 \vv<\nu, \frac{\delta f}{\delta \mu}>
 = \dd{t}{}|_0 f(\mu + t\nu). 
\]
The Hamiltonian vector field corresponding to a function $h: \gu^*\to\R$ is $X_h = \{h,.\}$. (This choice of sign for the Hamiltonian vector field is consistent with \cite{MichorDG} but should be reversed when compared to \cite{MR-mas}.)
The equations of motion associated to $h$ and \eqref{e:app-KKS} are therefore 
\begin{equation}
    \label{e:app-eom}
    \dot{f}
    = L_{X_h}f
    = \{h,f\}
    = -\vv<\ad(\frac{\delta h}{\delta\mu})^*\mu,\frac{\delta f}{\delta\mu}>  
\end{equation}
where $\ad(X)^*$ is the dual of the adjoint operation  $\ad(X).Y = [X,Y]$. The bracket \eqref{e:app-KKS} can be obtained by Poisson reduction of $T^*G$ with respect to the right action of $G$ on itself, see e.g.\ \cite[Example~34.15]{MichorDG}.

\subsection{Charged Euler equations as a Lie-Poisson system}
If $\aut_0^* = \diff_0^*\times\gau^*$ is the dual to $\aut_0 = \X_0(M)\times\F(M,\ko)$, as defined in Section~\ref{sec:SD_struc}, then \eqref{e:app-KKS} can be used to define a Poisson structure on $\aut_0^*$ even though $\A$ is only a topological group, and  not a Lie group.  See \cite{MRW84}. 
The relevant bracket follows now from the semi-direct product multiplication \eqref{e:s-d} in $\A$. For $(u,X)\in\aut_0=\X_0(M)\times\F(M,\ko)$ and $(v,Y)\in\aut_0=\X_0(M)\times\F(M,\ko)$ this is given by 
\[
 \left[
    \left(
    \begin{matrix}
    u\\ 
    X
    \end{matrix}
    \right)
    ,
    \left(
    \begin{matrix}
    v\\ 
    Y
    \end{matrix}
    \right)
 \right]
 =
    \ad    \left(
    \begin{matrix}
    u\\ 
    X
    \end{matrix}
    \right)
    \left(
    \begin{matrix}
    v\\ 
    Y
    \end{matrix}
    \right)
 =
     \left(
    \begin{matrix}
    [u,v]\\ 
    [X,Y]_{\mathfrak{k}} - L_u X + L_v Y
    \end{matrix}
    \right)
\]
where $[u,v] = -L_u v$ is minus the Lie bracket of vector fields and $[X,Y]_{\mathfrak{k}}$ is defined pointwise. The corresponding coadjoint operation follows from evaluating $(\nu,q)\in\aut_0^* = \Om^1(M)/d\F \times \F(M,\ko^*)$ on this bracket,
\begin{equation}
    \label{e:app-coad}
    \ad   
    \left(
    \begin{matrix}
    u\\ 
    X
    \end{matrix}
    \right)^*
    \left(
    \begin{matrix}
    \nu\\ 
    q
    \end{matrix}
    \right)
 =
    \left(
    \begin{matrix}
    \ad(u)^*\nu + X\diamond q\\ 
    \rho^uq + \ad_{\mathfrak{k}}^*(X)q
    \end{matrix}
    \right)
\end{equation}
where $\ad(u)^*\nu = L_u \nu = L_u[\Pi] = [L_u\Pi]$ is now the Lie derivative of the class $\nu = [\Pi]$ of the one-form $\Pi$ in $\Om^1(M)/d\F$. The diamond notation is defined by 
\[
 \vv<X\diamond q, v> = \int_M q L_v X \,\textup{vol}_M
\]
for $v\in\X_0(M)$. 
The Euclidean structure of $M$ implies that $X\diamond q = [q dX] \in \Om^1(M)/d\F$. Further, $\rho^u q = (\rho^{-u})^*q = \vv<u,\nabla q>$ has been introduced in Section~\ref{sec:SD_struc}. 

Using Lemma~\ref{pr:metric1}  the variational derivatives of the Kaluza-Klein Hamiltonian $h_0$ defined in \eqref{e:h_0} are 
\[
\left(
    \begin{matrix}
    \delta h_0/ \delta\nu\\ 
    \delta h_0/ \delta q
    \end{matrix}
    \right)
    =
\left(
    \begin{matrix}
    u\\ 
    X
    \end{matrix}
    \right)
    =
    \left(
    \begin{matrix}
    [\mu^M]^{-1}\nu - [A^*]q    \\ 
    -A u + \I^{-1}q
    \end{matrix}
    \right)
\]
Using \eqref{e:app-coad} the equations of motion corresponding to the Hamiltonian $h_0$ and the Poisson structure \eqref{e:app-KKS} on $\aut_0^*$ are therefore given by \eqref{e:lp1}-\eqref{e:lp3}.

Assume now that $n=3$ and that $K=S^1$. Thus the bracket $[.,.]_{\mathfrak{k}}$ is trivial.
Also, $\mu^M$ is the Euclidean inner product, whence $\mu^M u = u^{\flat}$ and $(\mu^M)^{-1}\Pi = \Pi^{\sharp}$ are the standard identifications for a vector field $u$ and a one-form $\Pi$ on $M$. 
We want to show that the system \eqref{e:lp1}-\eqref{e:lp3} is equivalent to \eqref{1e:EM_u_d}-\eqref{1e:cont_d}. Equation~\ref{1e:div0} follows because $u = [\mu^M]^{-1}\nu - [A^*]q\in\X_0(M)$ by construction of the isomorphism. Equation~\eqref{1e:cont_d} follows from \eqref{e:lp2} because $\rho^u q = \vv<u,\nabla q>$ and $\ad_{\mathfrak{k}}(X)^*q = 0$. Hence (neglecting the passage to the space of equivalence classes $\Om^1(M)/d\F$ from the second line onward) 
\begin{align*}
    [\mu^M]\dot{u}
    &= \dot{\nu} - \dot{q}[A]\\
    &=
    -L_u \mu^M u - L_u(q A) + q d i_u A - \by{1}{2}d\I^{-1}q^2 + \vv<u,\nabla q>A \\
    &=
    -\mu^M(\nabla_u u + \by{1}{2}\nabla\vv<u,u>) - q i_u dA - \by{1}{2}d\I^{-1}q^2 \\
    &=
    \mu^M\Big(
     -\nabla_u u + q u\times B - \nabla p 
     \Big) 
\end{align*}
where $i_u$ is the insertion (contraction) of a vector field into a $k$-form, we have used the Cartan formula $L_u A = di_u A + i_u dA$, $B = \textup{curl}\,(\mu^M)^{-1}A$ is the  vector field corresponding to the $2$-form  $dA$ via the Hodge-$*$ isomorphism, and $p$ is a function determined by the requirement that $\textup{div}(-\nabla_u u + q u\times B - \nabla p) = 0$. The equality $\mu^M(u\times B) = -i_u dA$ follows directly from a local coordinate calculation.

\end{document}